\begin{document}

\newtheorem{lem}{Lemma}
\newtheorem{prop}{Proposition}
\newtheorem{cor}{Corollary}
\newtheorem{remark}{Remark}
\newtheorem{defin}{Definition}
\newtheorem{thm}{Theorem}

\newcounter{MYtempeqncnt}

\title{Joint Wireless Information and Energy Transfer with Reduced Feedback in MIMO Interference Channels}

\author{Jaehyun Park,~\IEEEmembership{Member,~IEEE,}
     Bruno Clerckx,~\IEEEmembership{Member,~IEEE,}
\thanks{J. Park is with the Department of Electronic Engineering, Pukyong National University, Republic of Korea. B. Clerckx is with the Department of Electrical and Electronic Engineering, Imperial College London, United Kingdom and also with School
of Electrical Engineering, Korea University (e-mail:jaehyun@pknu.ac.kr, b.clerckx@imperial.ac.uk). B. Clerckx is the corresponding author.} }

\maketitle

\begin{abstract}
To determine the transmission strategy for the joint wireless information and energy transfer (JWIET) in the MIMO interference channel (IFC), the information access point (IAP) and energy access point (EAP) require the channel state information (CSI) of their associated links to both the information-decoding (ID) mobile stations (MSs) and energy-harvesting (EH) MSs (so-called local CSI). In this paper, to reduce the feedback overhead of MSs for the JWIET in two-user MIMO IFC, we propose a Geodesic energy beamforming scheme that requires partial CSI at the EAP. Furthermore, in the two-user MIMO IFC, it is proved that the Geodesic energy beamforming is the optimal non-cooperative strategy under local CSIT assumption. By adding a rank-one constraint on the transmit signal covariance of IAP, we can further reduce the feedback overhead to IAP by exploiting Geodesic information beamforming. Under the rank-one constraint of IAP's transmit signal, we prove that Geodesic information/energy beamforming approach is the optimal non-cooperative strategy for JWIET in the two-user MIMO IFC. We also discuss the extension of the proposed rank-one Geodesic information/energy beamforming strategies to general K-user MIMO IFC. Finally, by analyzing the achievable rate-energy performance statistically under imperfect partial CSIT, we propose an adaptive bit allocation strategy for both EH MS and ID MS.
\end{abstract}

\begin{keywords}
Joint wireless information and energy transfer, MIMO interference
channel, Geodesic beamforming, Limited feedback
\end{keywords}

\section{Introduction}
\label{sec:intro}

During the last decade, there has been a lot of interest to transfer energy wirelessly and recently, radio-frequency (RF) radiation has become a viable source for energy harvesting. Furthermore, due to the popularity of sensors, IoT, smart phones and various energy-consuming applications, the battery limitation of wireless devices becomes one of the main practical challenges in modern wireless communication system. Accordingly, the 4th generation (4G) and beyond 4G standards also consider ways to address battery limitations (e.g. device-to-device communications) \cite{3GPPMTC}. In addition, wireless power consortium was established and is working toward the global standardization of wireless charging technology \cite{WPS}.

Because RF signals carry information as well as energy, ``joint wireless information and energy transfer (JWIET)'' has attracted significant attention very recently \cite{Zhang1, Zhang2, KHuang1, Ozel, RRajesh, ANasir, YLuo, Tutuncuoglu2, KHuang2,  ParkBruno, ParkClerckx2}. Most previous works have studied the fundamental performance limits and the optimal transmission strategies of the JWIET under ideal environments (i.e., perfect full channel state information at the transmitter (CSIT))\footnote{Throughout the paper, the full CSI indicates the instantaneous channel matrix, itself. In contrast, the partial CSI indicates the partial information obtained from the full CSI (e.g., the largest singular value/the associated singular vector of the channel matrix or its long-term statistical information). If the (full/partial) CSI is exact (not contaminated by noise or quantization), it is then referred as perfect CSI.}. For example, assuming the perfect knowledge of full CSIT, the downlink of a cellular system with a single base station (BS) and multiple mobile stations (MSs) has been investigated in \cite{KHuang1}, the cooperative relay system in \cite{ANasir}, the broadcasting system in \cite{Zhang1, Zhang2}, and the multi-user SISO OFDM system in \cite{XZhou}. In addition, there have been several studies of JWIET in the interference channel (IFC) \cite{Tutuncuoglu2, KHuang2, ParkBruno, ParkClerckx2}. Because the interference has different impacts on the performances of information decoding (ID) (negative impact) and energy harvesting (EH) (positive impact) at the receivers, the design of suitable transmission strategies for JWIET is a critical issue especially in IFC. Furthermore, the transmission strategy heavily relies on the knowledge of CSIT. For example, to determine the transmission strategy for JWIET in the MIMO IFC, the information access point (IAP) and energy access point (EAP) require the CSI of their associated links to both the ID MSs and EH MSs (i.e. so-called local CSI). However, in a practical system, the acquisition of full CSIT incurs a large system overhead and is more challenging in the MIMO IFC. There exist few papers that address JWIET with partial CSIT (mainly, the long-term correlation) and robust beamforming schemes accounting for the imperfect full CSIT \cite{ZXiang, XChen}. In \cite{ZXiang}, MISO downlink broadcasting channel with three nodes - one BS, one ID MS, and one EH MS - is considered, while in \cite{XChen}, a single user MISO uplink channel is considered.

In this paper, we address how to reduce the feedback overhead in a two-user MIMO IFC, where one IAP and one EAP coexist by sharing the same spectrum resource and serve one ID MS and one EH MS, respectively, in a fully distributed manner. We note that, to the best of the authors' knowledge, it is the first time that the partial CSIT is treated in MIMO IFC accounting for JWIET. Interestingly, we can prove that our proposed non-cooperative strategy with partial CSIT is optimal, contrary to the one currently known in the literature \cite{ParkBruno} that are suboptimal\footnote{Throughout the paper, the notion of optimality is under the assumption that the transmitters are non-cooperative and operate in a distributed manner with local CSIT, unless stated otherwise.}. Because the pseudo-random chaotic waves can be utilized to increase the energy harvesting efficiency \cite{ACollado}, the interference from EAP is assumed not decodable at the ID MS as in \cite{ParkBruno}. Then, the EAP may create a rank-one beam with the aim to either maximize the energy harvested at the EH MS (maximum energy beamforming, MEB) or minimize the interference at the ID MS (minimum leakage beamforming, MLB). In \cite{ParkBruno}, it is proved that to achieve the optimal rate-energy (R-E) performance, the energy transmitter should follow a rank-one beamforming strategy with a proper power control. Accordingly, we first propose a rank-one Geodesic energy beamforming scheme that requires partial CSI at the EAP (mainly, several singular vectors of its associated channel matrices). Here, EAP steers its rank-one beam on the Geodesic curve between MEB and MLB directions. Interestingly, the rationale behind the signal-to-leakage-and-harvested energy-ratio (SLER) beamforming developed in \cite{ParkBruno} can be explained in terms of Geodesic beamforming, but, contrary to the Geodesic beamformer, SLER requires the full CSI of the links to both ID MS and EH MS at the EAP. Furthermore, we prove that the Geodesic energy beamforming scheme is the optimal strategy in the two-user MIMO IFC. Next, by adding a rank-one constraint on the transmit signal covariance of IAP, we can further reduce the feedback overhead to IAP. Here, we propose a Geodesic information beamforming scheme. Under the assumption of the rank-one constraint of IAP's transmit signal, we prove that the Geodesic information/energy beamforming approach is the optimal strategy for JWIET in the two-user MIMO. Motivated by \cite{ParkClerckx2}, the extension of the proposed Geodesic information/energy beamforming strategies to the general K-user MIMO IFC is discussed. Note that to exploit the proposed Geodesic information/energy beamforming, the necessary partial CSI at IAP and EAP is composed of, mainly, the unitary vectors associated with their links to both ID/EH MSs and they can be efficiently quantized using random vector quantization (RVQ) codebooks \cite{N_Jindal2, NRavindran}. Finally, by analyzing the achievable rate-energy performance statistically under the imperfect partial CSIT due to the RVQ, we propose an adaptive bit allocation strategy for both ID/EH MSs that is a function of the path loss and Geodesic angles.

The rest of this paper is organized as follows. In Section
\ref{sec:systemmodel}, we introduce the system model for the two-user
MIMO IFC. In Section \ref{sec:perfectCSIT}, we discuss the transmission
strategies -- MEB, MLB, and SLER -- when full local CSIT is available at both IAP and EAP. In Section \ref{sec:PartialCSIT_EH}, we present the Geodesic energy beamforming when partial CSIT is available at EAP. In Section
\ref{sec:PartialCSIT_IDEH}, when the IAP opts for the rank-one information beamforming, we optimize the information/energy beamforming strategies jointly. In addition, we propose the Geodesic information/energy beamforming schemes and present the extension of the proposed schemes to the general K-user MIMO IFC. In Table \ref{table_intro}, we summarize the available CSIT and the rank $r$ of the transmit signal covariance at IAP and EAP. In Sections \ref{sec:Bitallocation}, we discuss the adaptive bit allocation strategy for both ID/EH MSs. In Section \ref{sec:simulation}, we provide several simulation results and in Section \ref{sec:conc} we give our conclusion.

{\scriptsize{
{\renewcommand\baselinestretch{1.}{
\begin{table}[t]
\renewcommand{\arraystretch}{1.}
\caption{Available CSIT and the rank $r$ of the transmit signal covariance}
\label{table_intro}
\begin{center}
\!\!\begin{tabular}{|c|c|c|c|}\hline
& Section III & Section IV & Section V\\
\hline \multirow{2}*{EAP} & Full CSIT & Partial CSIT  & Partial CSIT\\
 & ($r=1$) & ($r=1$) & ($r=1$)\\
\hline \multirow{2}*{IAP} & Full CSIT  & Full CSIT & Partial CSIT \\
 & ($r\geq 1$) &  ($r\geq 1$) & ($r=1$)\\
\hline\hline K-user & \multirow{2}*{\cite{ParkClerckx2}} & Extendable &  \multirow{2}*{Section V.C} \\
extension & & with \cite{ParkClerckx2} & \\
\hline
\end{tabular}\!
\end{center}
\end{table}
}}
}}

Throughout the paper, matrices and vectors are represented by bold
capital letters and bold lower-case letters, respectively. The
notations $({\bf A})^{H} $, $({\bf A})^{\dagger} $, $({\bf A})_i$,
$[{\bf A}]_i$, $tr({\bf A})$, and $\det({\bf A})$ denote the
conjugate transpose, pseudo-inverse, the $i$th row, the $i$th
column, the trace, and the determinant of a matrix ${\bf A}$,
respectively. The matrix norm $\|{\bf A}\|$ and $\|{\bf A}\|_F$
denote the 2-norm and Frobenius norm of a matrix ${\bf A}$,
respectively, and the vector norm $\|{\bf a}\|$ denotes the 2-norm
of a vector ${\bf a}$. In addition, $(a)^+ \triangleq \max (a, 0)$
and ${\bf A} \succeq {\bf 0}$ (resp. ${\bf A} \succ {\bf 0}$) means that a matrix ${\bf A}$ is positive
semi-definite (resp. definite). Finally, ${\bf I}_{M}$ denotes the $M \times M$
identity matrix and $\lceil\cdot\rfloor $ denotes the rounding operation.

\section{System model}
\label{sec:systemmodel}

\begin{figure}
\begin{center}
\begin{tabular}{c}
\includegraphics[height=3.6cm]{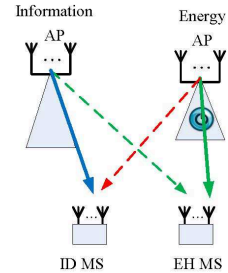}
\end{tabular}
\end{center}
\caption[JWIET_two_userIC_block]
{ \label{JWIET_two_userIC_block} Two-user MIMO IFC in ($EH_1$,
$ID_2$) mode.}
\end{figure}
We consider a two-user MIMO IFC where one IAP with $M_{I,T}$ transmit antennas and one EAP with $M_{E,T}$ transmit antennas, coexist by sharing the same spectrum resource and, respectively, serve one ID MS with $M_{I,R}$ receive antennas and one EH MS with $M_{E,R}$ receive antennas, as shown in Fig. \ref{JWIET_two_userIC_block}.\footnote{Our system model can generalize the scenario where an EAP is additionally deployed in the conventional single-cell (information) downlink system. Furthermore, motivated by \cite{ParkClerckx2}, the extension
to the general K-user MIMO IFC is discussed in Section \ref{ssec:KuserIFC}.} Without loss of generality, EAP (or, energy transmitter) and EH MS are indexed as the first transceiver pair and IAP (or, information transmitter) and ID MS are indexed as the second transceiver pair. In addition, we assume that $M_{I,T}=M_{E,T}= M_{I,R}=M_{E,R}= M$ (i.e., the square matrices) for the sake of readability but the same algorithms can be straightforwardly applied/extended to general matrix sizes. Assuming a frequency flat fading channel, which is static over several frames, the received signal ${\bf y}_i \in \mathbb{C}^{M \times 1}$ for $i=1, 2$ can then be written as
\begin{eqnarray}\label{Sys_1}\nonumber
{\bf y}_1 = {\bf H}_{11} {\bf x}_1 +{\bf H}_{12} {\bf x}_2 + {\bf
n}_1,
\\
{\bf y}_2 = {\bf H}_{21} {\bf x}_1 +{\bf H}_{22} {\bf x}_2 + {\bf
n}_2,
\end{eqnarray}
where ${\bf n}_i \in \mathbb{C}^{M\times 1} $ is a complex white
Gaussian noise vector with a covariance matrix $\sigma_n^2{\bf
I}_{M}$ and ${\bf H}_{ij} \in \mathbb{C}^{M\times M}$ is the
frequency-flat fading channel from the $j$th
transmitter to the $i$th MS whose elements are independent and identically distributed (i.i.d.) zero-mean complex Gaussian random variables (RVs) with a unit variance for $i=j$ and a variance $\alpha_{ij}$ for $i\neq j$. Here, $\alpha_{ij}\in [0,1]$ describes the relative path loss of the cross link compared to the direct link. The vectors ${\bf x}_1, {\bf x}_2 \in \mathbb{C}^{M \times 1}$ are the energy/information transmit
signals, respectively, and they have a transmit power constraint as
\begin{eqnarray}\label{Sys_2}
E[\|{\bf x}_j\|^2] \leq P_{T,j}  {\text{ for }} j=1 \text{ and } 2.
\end{eqnarray}
In this paper, $P_{T,1}=P_{T,2} = P$ for readability purpose, and the SNR is defined as $ SNR = \frac{P}{\sigma_n^2}$. Throughout the paper, to ease readability, it is assumed without loss of generality that
$\sigma_n^2 = 1$, unless otherwise stated. General environments,
characterized by other values of the channel/noise power, can be
described simply by adjusting $P$. Throughout the paper, the singular value decomposition (SVD) of ${\bf H}_{ij}$ can be given as
\begin{eqnarray}\label{Sys_2_rev_Geo}
{\bf H}_{ij} = {\bf U}_{ij}{\bf \Sigma}_{ij}{\bf V}_{ij}^H, \quad {\bf \Sigma}_{ij} = diag\{\sigma_{ij, 1 },..., \sigma_{ij,
M } \},
\end{eqnarray}
where ${\bf U}_{ij}$ and ${\bf V}_{ij}$ are $M \times M$ unitary matrices and $\sigma_{ij, 1}\geq...\geq \sigma_{ij, M }$.

Note that because the pseudo-random chaotic waves can be utilized to increase the energy harvesting efficiency \cite{ACollado}, the interference from EAP is assumed not decodable at the ID MS \cite{ ParkBruno}. The achievable rate at
ID MS, $R_2$, is then given by \cite{Scutari}
\begin{eqnarray}\label{Sys_3_revised}
R_2 = \log \det ({\bf I}_{M} +{\bf H}_{22}^H{\bf R}_{-2}^{-1}{\bf
H}_{22}{\bf Q}_2 ),
\end{eqnarray}
where ${\bf R}_{-2}$ indicates the covariance matrix of noise and
interference at the ID MS, i.e., ${\bf R}_{-2} = {\bf I}_{M} + {\bf H}_{21}{\bf Q}_1 {\bf H}_{21}^H$.
Here, ${\bf Q}_j = E [{\bf x}_j {\bf x}_j^H ]$ denotes the covariance matrix of the transmit signal at the $j$th transmitter and, from (\ref{Sys_2}), $tr({\bf Q}_j) \leq P $. At the EH MS, the total harvested power $E_1$ (more exactly, harvested energy normalized by the baseband symbol period) is given by
\begin{eqnarray}\label{Sys_3}\nonumber
E_1 &=& \zeta_1 E[\|{\bf y}_1 \|^2 ] = \zeta_1 tr\left( \sum_{j=1}^2{\bf H}_{1j}{\bf Q}_j{\bf
H}_{1j}^H +{\bf I}_{M} \right),
\end{eqnarray}
where $\zeta_1$ denotes the efficiency constant for converting the
harvested energy to electrical energy to be stored \cite{Vullers,
Zhang1}. For simplicity, it is assumed that $\zeta_i=1$ and the
noise power is negligible compared to the transferred energy from
either EAP or IAP.\footnote{Even though, throughout the paper, the harvested energy due to the background additive noise at EH receiver for ease of explanation, our analysis can be extended to the scenario of the non-negligible additive noise without difficulty.} That is,
\begin{eqnarray}\label{Sys_4}E_1 &\approx& tr\left( \sum_{j=1}^2{\bf
H}_{1j}{\bf Q}_j{\bf H}_{1j}^H \right)\nonumber \\&=& tr\left( {\bf H}_{11}{\bf Q}_1{\bf H}_{11}^H \right)+tr\left(
{\bf H}_{12}{\bf Q}_2{\bf H}_{12}^H \right)=
E_{11}+E_{12},
\end{eqnarray}
where $E_{ij}=tr\left( {\bf H}_{ij}{\bf Q}_j{\bf H}_{ij}^H
\right)$ denoting the energy transferred from the $j$th
transmitter to the $i$th MS.
Then, the achievable rate-energy region is given as
\begin{eqnarray}\label{oneIDoneEH_1}
\!\!&\!C_{R\!-\!E} (P) \!\triangleq  \!\Biggl\{ \!(R, E) : R \leq
\log \det({\bf I}_{M} + {\bf H}_{22}^H{\bf R}_{-2}^{-1}{\bf
H}_{22}{\bf Q}_2 ),\!&\!\nonumber\\
\!\!&\!E \!\leq \!\sum_{\!j\!=\!1}^{\!2} tr ({\bf H}_{1j} {\bf
Q}_j {\bf H}_{1j}^H), tr({\bf Q}_j)\!\leq \!P, {\bf Q}_j\!\succeq
\!{\bf 0}, j\!=\!1,\!2\! \Biggr\}\!.\!&\!
\end{eqnarray}

\section{Full local CSIT at both information/energy transmitters}\label{sec:perfectCSIT}
In this section, we briefly review the JWIET transmission strategy for two user MIMO IFC \cite{ParkBruno},
assuming that both EAP and IAP have the full knowledge of the CSI of their associated links (i.e. the links between a transmitter and all
MSs) but do not share those CSI between them (i.e. full local CSIT).
In \cite{ParkBruno}, a necessary condition of the optimal transmission strategy has been found for the two-user MIMO IFC with one EH MS and one ID MS, in which the energy transmitter should take a rank-one energy beamforming strategy with a proper power control.
The optimal ${\bf Q}_1$ at the boundary of the achievable rate-energy region has a rank one at most. That is, $rank ({\bf Q}_1) \leq 1$.

Accordingly, the energy transmitter may steer its signal to maximize the energy
transferred to the EH MS and the corresponding transmit
covariance matrix ${\bf Q}_1$ and beamforming vector ${\bf v}_{E}$ are then given by
\begin{eqnarray}\label{MEB}
{\bf Q}_1 = P_1 {\bf v}_{E}{\bf v}_{E}^H,~ {\bf v}_{E} = [{\bf V}_{11}]_1,
\end{eqnarray}
where $0\leq P_1 \leq P$. Here,
the energy harvested from the first transmitter is given by $P_1
\sigma_{11, 1 }^2$. From an ID perspective, the energy transmitter should steer its
signal to minimize the interference power to the ID MS and the corresponding transmit
covariance matrix and beamforming vector ${\bf v}_{L}$ are then given by
\begin{eqnarray}\label{MLB}
{\bf Q}_1 = P_1 {\bf v}_{L}{\bf v}_{L}^H,~{\bf v}_{L} = [{\bf V}_{21}]_M,
\end{eqnarray}
where $0\leq P_1 \leq P$. Then,
the energy harvested from the first transmitter is given by $P_1
\|{\bf H}_{11} [{\bf V}_{21}]_M\|^2$.
Because MEB and MLB strategies are developed according to different aims - either
maximizing transferred energy to EH MS or minimizing interference
(or, leakage) to ID MS, respectively, they have their own weakness - causing either large interference to ID MS or insufficient energy to be harvested at EH MS. To maximize the transferred energy to EH MS and
simultaneously minimize the leakage to ID MS, we have also introduced the metric signal-to-leakage-and-harvested energy ratio
(SLER) as
\begin{eqnarray}\label{GSVD1}
\!\!SLER \!&\!\!\!=\!\!\!&\!\frac{\|{\bf H}_{11}{\bf v} \|^2}{\!\|\!{\bf H}_{21}\!{\bf v} \!\|^2\!
+\! max(\bar E \!-\!P\!\|\!{\bf H}_{11}\!\|^2\! ,0)\!} \nonumber \\\!\!&\!\!\!=\!\!&\!\frac{{\bf v}^H{\bf H}_{11}^H{\bf H}_{11}{\bf v}_k }{\!{\bf
v}_k^H\!\left(\!{\bf H}_{21}^H\!{\bf H}_{21} +{max\!({\bar E}/{P} \!-\!\|{\bf
H}_{11}\|^2\! ,0\!)\!}{\bf I}_M \!\right)\!{\bf v}_k \!}\!,\!
\end{eqnarray}
which balances both metrics -
energy maximization to EH MS and leakage minimization to ID MS, as confirmed in \cite{ParkBruno}.
The corresponding transmit covariance matrix and beamforming vector ${\bf v}_{S}$ that maximizes SLER of
(\ref{GSVD1}) are then given by
\begin{eqnarray}\label{GSVD3}
{\bf Q}_1 = P_1 {\bf v}_{S} {\bf v}_{S}^H,~{\bf v}_{S} = \frac{ \bar{\bf v}}{\|\bar{\bf v}\|},
\end{eqnarray}
where $\bar{\bf v}$ is the generalized eigenvector associated with
the largest generalized eigenvalue of the matrix pair $({\bf
H}_{11}^H{\bf H}_{11}, {\bf H}_{21}^H{\bf H}_{21} +{\max(\bar E
/{ P} -\|{\bf H}_{11}\|^2 ,0)}{\bf I}_M)$.
Here, $\bar{\bf v}$
can be efficiently computed by using a GSVD algorithm \cite{JPark}.

\subsection{Optimization for the achievable Rate-Energy region}\label{sec:REregion_perfectCSIT}
Given that ${\bf Q}_1$ is chosen among (\ref{MEB}), (\ref{MLB}), and (\ref{GSVD3}), the
achievable rate-energy region is then given as:
\begin{eqnarray}\label{oneIDoneEHmax1}
\!\!&\!C_{R-E} (P) =  \Biggl\{ (R, E) : R = R_2, E =E_{11} + E_{12}, \quad \!&\!\nonumber\\
\!\!&\!\!R_2 \!\leq \!\log \det({\bf I}_{M} \!+\! {\bf H}_{22}^H{\bf
R}_{-2}^{-1}{\bf H}_{22}{\bf Q}_2 ), E_{12} \!\leq \! tr ({\bf
H}_{12} {\bf Q}_2 {\bf H}_{12}^H)\!, \!\!&\!\!\nonumber\\\!\!\!&\!\!
tr({\bf Q}_2)\leq P, {\bf Q}_2\succeq {\bf 0}, 0 \leq P_1 \leq P
\Biggr\},\!\!\!&\!\!\!
\end{eqnarray}
where
\begin{eqnarray}\label{REregion_2}
E_{11} = \omega_1 P_1, {\bf R}_{-2} = {\bf I}_{M}+ P_1{\bf \Omega}_{21},
\end{eqnarray}
with $(\omega_1,  {\bf \Omega}_{21})= (\|{\bf H}_{11} {\bf v}_p \|^2,  {\bf H}_{21} {\bf v}_p {\bf v}_p^H{\bf H}_{21}^H )$ and $p \in\{E, L, S\}$ for MEB, MLB, and SLER maximization beamforming, respectively.

Accordingly, by letting $\tilde{\bf H}_{22} = {\bf
R}_{-2}^{-1/2}{\bf H}_{22}$, we have the following optimization
problem for the rate-energy region of
(\ref{oneIDoneEHmax1})
\begin{eqnarray}\label{oneIDoneEHmax3}
\!\!(\!P1\!)\! \underset{P_1, {\bf Q}_2}{\text{ maximize}}& J \triangleq \log \det
({\bf I}_{M} + \tilde{\bf H}_{22}{\bf Q}_2\tilde{\bf H}_{22}^H
)\\\label{oneIDoneEHmax3_1} \!\!{\text{subject to}}\!&\!\!tr({\bf
H}_{12}{\bf Q}_2{\bf H}_{12}^H) \geq \max(\bar E \!-\!
E_{11},0)\!\\\label{oneIDoneEHmax3_2} &tr({\bf Q}_2) \leq P,~{\bf
Q}_2 \succeq{\bf 0},~0 \leq P_1 \leq P,
\end{eqnarray}
where $\bar E$ can take any value less than $E_{\max}$. $E_{\max}$ denotes
the maximum energy transferred from both transmitters, i.e., $E_{\max} = \omega_1P_1 + P\sigma_{12,1}^2$ where $\sigma_{12,1}$ denotes the largest singular value of ${\bf H}_{12}$. Note that because $E_{11}$ in (\ref{oneIDoneEHmax3_1}) and
$\tilde{\bf H}_{22}$ in (\ref{oneIDoneEHmax3}) depend on
$P_1(\leq P)$, we identify the achievable R-E region iteratively
as:

\vspace*{2pt}Algo. 1. {\it{\underline{Iterative algorithm for the
achievable R-E region:}}}
\begin{enumerate}
\item Initialize $n=0$, $P_1^{(0)}=P$,
\begin{eqnarray}\label{ReviseAlgo2_1}
E_{11}^{(0)} = \omega_1 P_1^{(0)}, {\bf R}_{-2}^{(0)} = {\bf I}_{M}+ P_1^{(0)}{\bf \Omega}_{21}.
\end{eqnarray}
\item For $n=0:N_{max}$
\begin{enumerate}
\item Solve the optimization problem (P1) for ${\bf Q}_2^{(n)}$ as
a function of $E_{11}^{(n)}$ and ${\bf R}_{-2}^{(n)}$.
\item If $tr ({\bf H}_{12} {\bf Q}_2^{(n)} {\bf H}_{12}^H) +
E_{11}^{(n)}
>\bar E$
\begin{eqnarray}\label{eqnRevise_Algo_2_2} P_1^{(n+1)} =
max\left(P_1^{(n)} - \Delta, 0\right),
\end{eqnarray}
where the step size $\Delta$ is given by a value on $[0, \Delta_{max}]$ with $\Delta_{max}=\frac{ tr ({\bf H}_{12} {\bf Q}_2^{(n)}{\bf
H}_{12}^H) + \omega_1 P_1^{(n)} - \bar E }{\omega_1}$.
\item Else if, $tr ({\bf H}_{12} {\bf Q}_2^{(n)} {\bf H}_{12}^H) +
E_{11}^{(n)}
=\bar E$, then, $P_1^{(n+1)} =
\gamma_1 P_1^{(n)},$ where $\gamma_1 ( <1)$ is a power reduction factor.
\item Update
$E_{11}^{(n+1)}$ and ${\bf R}_{-2}^{(n+1)}$ with $P_1^{(n+1)}$
similarly to (\ref{ReviseAlgo2_1}).
\end{enumerate}
\item Finally, the boundary point of the achievable R-E region is given as $(R, E) =(\log \det
({\bf I}_{M} + \tilde{\bf H}_{22}{\bf Q}_2^{(N_{max}+1)}\tilde{\bf
H}_{22}^H ),~ E_{11}^{(N_{max}+1)} + tr ({\bf H}_{12} {\bf
Q}_2^{(N_{max}+1)} {\bf H}_{12}^H)) $.
\end{enumerate}
 \vspace*{2pt}
In Step 2 of Algorithm 1, the optimization problem (P1) with $E_{11}^{(n)}$ and ${\bf
R}_{-2}^{(n)}$ can be tackled with two different approaches
according to the value of $\bar E$, i.e., $0 \leq \bar E \leq
E_{11}$ and $E_{11} < \bar E \leq E_{\max}$, where we have
dropped the superscript of the iteration index $(n)$ for notation
simplicity. For $0 \leq \bar E \leq E_{11}$, (P1) becomes the
conventional rate maximization problem for single-user effective
MIMO channel (i.e., $\tilde{\bf H}_{22}$) \cite{ParkBruno} resulting in the maximum achievable rate for the given rank-one
strategy ${\bf Q}_1$. For $E_{11} < \bar E \leq E_{\max}$, the optimization problem (P1)
can be solved by a ``water-filling-like'' approach similar to the
one appeared in the joint wireless information and energy
transmission optimization with a single transmitter \cite{Zhang1}.
That is, by defining the Lagrangian function of ($P1$) can be written as
\begin{eqnarray}\label{oneIDoneEHmax8}\nonumber
\!&L({\bf Q}_2, \lambda, \mu) = \log \det ({\bf I}_{M} +
\tilde{\bf H}_{22}{\bf Q}_2\tilde{\bf H}_{22}^H
)  \!&\!\nonumber\\\!&\!+\lambda( tr({\bf H}_{12}{\bf Q}_2{\bf H}_{12}^H) -
(\bar E \!-\! E_1)  ) - \mu (tr({\bf Q}_2) - P),&\!\nonumber
\end{eqnarray}
and the corresponding dual function as $g(\lambda, \mu) = \underset{{\bf Q}_2 \succeq{\bf 0}}{\max} L({\bf
Q}_2, \lambda, \mu)$, the optimal solution is computed from \cite{SBoyd, Zhang1}
\begin{eqnarray}\label{oneIDoneEHmax6}
{\bf Q}_2 &=& {\bf A}^{-1/2}\tilde{\bf V}'_{22}\tilde{\boldsymbol
\Lambda}'\tilde{\bf V}_{22}'^H{\bf A}^{-1/2},
\end{eqnarray}
where $\tilde{\bf V}'_{22}$ is obtained from the SVD of the matrix
$\tilde{\bf H}_{22}{\bf A}^{-1/2}$, i.e., $\tilde{\bf H}_{22}{\bf
A}^{-1/2} =\tilde{\bf U}'_{22}\tilde{\boldsymbol \Sigma}'_{22}
\tilde{\bf V}_{22}'^H$. Here, $\tilde{\boldsymbol \Sigma}'_{22} =
diag\{ \tilde\sigma_{22,1}',...,\tilde\sigma_{22,M}' \}$ with $
\tilde\sigma_{22,1}' \geq...\geq \tilde\sigma_{22,M}' \geq 0$ and
$\tilde{\boldsymbol \Lambda}' = diag\{ \tilde p_1,...,\tilde p_M
\}$ with $\tilde p_i = (1-1/\tilde\sigma_{22,i}'^2)^+ $,
$i=1,...,M$. The parameters $\mu$ and $\lambda$ minimizing
$g(\lambda, \mu)$ can be solved by the subgradient-based
method \cite{Zhang1, XZhao}, where the subgradient of
$g(\lambda, \mu)$ is given by $(tr({\bf H}_{12}{\bf Q}_2{\bf
H}_{12}^H) - (\bar E \!-\! E_1),  P - tr({\bf Q}_2)) $. Because (\ref{oneIDoneEHmax3}) is concave over ${\bf Q}_2$ and monotonically decreasing with respect to $P_1$, we can easily find that every superlevel set $\{{\bf Q}_2, P_1| J({\bf Q}_2, P_1) \geq \alpha\}$ for $\alpha \in \mathbb{R}$ is convex. That is, (\ref{oneIDoneEHmax3}) is quasi-concave \cite{SBoyd} and, because Algorithm 1 converges
monotonically, the converged solution of Algorithm 1 is globally optimal under the local CSIT with a fixed energy beamforming strategy \cite{Bazaraa_book}. See also \cite{ParkBruno} for the details. If we set the maximum power $P_{T,1}$ as 0, Algorithm 1 for the MIMO IFC boils down to that for the MIMO BC in \cite{Zhang1}.
\begin{remark}\label{remark0}
Note that the iterative Algorithm 1 for the optimization of the covariance matrices requires full local CSIT at both energy/information transmitters.
That is, at the energy transmitter, the channel matrices of ${\bf H}_{11}$ and ${\bf H}_{21}$ are required in the computation of $\omega_1$ and, at the information transmitter, the channel matrices of ${\bf H}_{12}$ and ${\bf H}_{22}$ and the interference covariance matrix ${\bf R}_{-2}^{(n)}$ are required in the optimization of ${\bf Q}_2^{(n)}$. Here, ${\bf R}_{-2}^{(n)}$ can be estimated in ID MS and reported to the IAP. In the same manner, the $E_{11}^{(n)}$ needs to be measured and reported for $P_1$ to be adjusted. Note that the feedback overhead of several scalar values such as the target harvested energy $\bar E$, $E_{11}^{(n)}$, and $tr ({\bf H}_{12} {\bf Q}_2^{(n)} {\bf H}_{12}^H)$ are negligible compared to that of the channel matrices. In the next section, motivated by the fact that the SLER maximization beamforming creates a rank-one unit-norm beam with a direction softly bridging MEB and MLB, we develop a Geodesic geometry based beamforming, which reduces the feedback overhead to the energy transmitter.
\end{remark}

\section{Partial CSIT at energy transmitter: Geodesic geometry based feedback reduction}\label{sec:PartialCSIT_EH}

\subsection{Preliminary: Geodesic geometry}\label{ssec:GeodesicPrelim}
Given two points on a manifold, a geodesic is the shortest curve on the manifold between two points. For example, for two points on $M$-dimensional Euclidean space, the geodesic is a line connecting the two points. In contrast, for two points on $M$-dimensional unit-norm Euclidean space, the geodesic is the curve connecting the two points on the $M$-dimensional {\it{unit-norm}} sphere. Then, for any two vectors, ${\bf v}_1$ and ${\bf v}_2$ in $\{ {\bf v} | \|{\bf v}\|^2 = 1, {\bf v} \in \mathbb{C}^{M\times 1} \}$, the vector between them can be computed by using geodesic geometry as \cite{TPande}
\begin{eqnarray}\label{Partial1}
{\bf v}_g(\theta) = {\bf v}_1 {u}_1 \cos(\theta) - ({\bf v}_1)^{\perp}{\bf u}_2 \sin(\theta),
\end{eqnarray}
where ${u}_1$ is the phase difference between ${\bf v}_1$ and ${\bf v}_2$, obtained from
\begin{eqnarray}\label{Partial2}
{\bf v}_1^H {\bf v}_2= {u}_1 \cos \phi_1,
\end{eqnarray}
where $\phi_1$ is the principal angle between ${\bf v}_1$ and ${\bf v}_2$ given as $\phi_1 = \cos^{-1}\left( |{\bf v}_1^H {\bf v}_2| \right) $. Note that this principal angle is the Geodesic distance between ${\bf v}_1$ and ${\bf v}_2$. Here, $({\bf v}_1)^{\perp}\in \mathbb{C}^{M\times (M-1)}$ is the orthogonal completion of ${\bf v}_1$, i.e., $({\bf v}_1)^{\perp}$ spans the column null space of ${\bf v}_1^T$. In addition, $0\leq \theta \leq \phi_1$ and ${\bf u}_2\in \mathbb{C}^{(M-1)\times 1}$ is a unit norm vector such that $\|{\bf u}_2\|^2 = 1$.
Because ${\bf v}_g(\phi_1) = {\bf v}_2$, $({\bf v}_1)^{\perp}{\bf u}_2$ can be given as
\begin{eqnarray}\label{Partial3}
({\bf v}_1)^{\perp}{\bf u}_2 = [{\bf v}_1 {u}_1 \cos(\phi_1) - {\bf v}_2](\sin\phi_1)^{-1}.
\end{eqnarray}

\subsection{Geodesic geometry based rank-one energy beamforming}\label{ssec:GeodesicBF}

Because ${\bf v}_p$, $p\in \{E, L, S\}$ are on the $M$ dimensional unit sphere and the SLER maximization beamforming create a rank-one beam with a direction softly bridging MEB and MLB, from (\ref{Partial1}), we can generate the Geodesic beamforming vector as
\begin{eqnarray}\label{Partial4}
{\bf v}_G(\theta_1) = {\bf v}_E {u}_E \cos(\theta_1) - ({\bf v}_E)^{\perp}{\bf u}_L \sin(\theta_1),
\end{eqnarray}
where $({\bf v}_E)^{\perp}{\bf u}_L = [{\bf v}_E {u}_E \cos(\phi_E) - {\bf v}_L](\sin\phi_E)^{-1}$ and $ {u}_E = \frac{{\bf v}_E^H {\bf v}_L}{ |{\bf v}_E^H {\bf v}_L| }$.
Here, $\phi_E$ is the principal angle between ${\bf v}_E$ and ${\bf v}_L$ given as $\phi_E = \cos^{-1}\left( |{\bf v}_E^H {\bf v}_L| \right) $ and $0\leq \theta_1 \leq \phi_E$. Then, ${\bf v}_G(\theta_1)$ can be rewritten as
\begin{eqnarray}\label{Partial5}
\!\!{\bf v}_G(\theta_1)\! &\!\!=\!\!&\! {\bf v}_E {u}_E \cos(\theta_1) \nonumber\\\!\!&\!\!\!\!&\!- [{\bf v}_E {u}_E \cos(\phi_E) \!- \!{\bf v}_L](\sin\phi_E)^{-1} \sin(\theta_1).
\end{eqnarray}
Interestingly, when $\theta_1$ goes to 0 (resp, $\phi_E$), ${\bf v}_G(\theta_1)$ becomes close to MEB vector (resp, MLB vector) and we can have the following propositions, which are useful to show the optimality of Geodesic beamforming in Proposition \ref{prop2} and Theorems 1 and 2. Their proofs are given in Appendix \ref{appndix1}.
\begin{prop}\label{prop3} The function $E_{11}(\theta_1)\triangleq tr\left({\bf H}_{11}{\bf v}_G(\theta_1){\bf v}_G^H(\theta_1){\bf H}_{11}^H\right)$ is monotonically decreasing with respect to ${\theta}_1$ for the Geodesic energy beamforming.
\end{prop}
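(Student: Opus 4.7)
The plan is to compute $E_{11}(\theta_1)=\|{\bf H}_{11}{\bf v}_G(\theta_1)\|^2$ explicitly and show that it reduces to a simple sinusoidal expression whose monotonicity on $[0,\phi_E]$ is immediate. First I would set ${\bf w}\triangleq ({\bf v}_E)^{\perp}{\bf u}_L$ so that
\begin{eqnarray}\nonumber
{\bf v}_G(\theta_1) = {u}_E\cos(\theta_1){\bf v}_E - \sin(\theta_1){\bf w}.
\end{eqnarray}
By construction ${\bf w}$ lies in the column null space of ${\bf v}_E^H$ and $\|{\bf w}\|=1$, so ${\bf v}_E^H{\bf w}=0$ and $\|{\bf v}_G(\theta_1)\|=1$ for every $\theta_1$ (since $|u_E|=1$). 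Expanding the quadratic form then gives
\begin{eqnarray}\nonumber
E_{11}(\theta_1) = \cos^2(\theta_1)\,{\bf v}_E^H{\bf H}_{11}^H{\bf H}_{11}{\bf v}_E + \sin^2(\theta_1)\,{\bf w}^H{\bf H}_{11}^H{\bf H}_{11}{\bf w} \\\nonumber - 2\Re\bigl\{u_E^{*}\sin(\theta_1)\cos(\theta_1)\,{\bf v}_E^H{\bf H}_{11}^H{\bf H}_{11}{\bf w}\bigr\}.
\end{eqnarray}

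The crucial observation, which I expect to be the main step, is that the cross term vanishes. Since ${\bf v}_E=[{\bf V}_{11}]_1$ is the top right singular vector of ${\bf H}_{11}$, it is an eigenvector of ${\bf H}_{11}^H{\bf H}_{11}$ with eigenvalue $\sigma_{11,1}^2$. Therefore ${\bf v}_E^H{\bf H}_{11}^H{\bf H}_{11}{\bf w}=\sigma_{11,1}^2\,{\bf v}_E^H{\bf w}=0$, and the expression collapses to
\begin{eqnarray}\nonumber
E_{11}(\theta_1) = \sigma_{11,1}^2\cos^2(\theta_1) + \|{\bf H}_{11}{\bf w}\|^2\sin^2(\theta_1).
\end{eqnarray}

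Writing $c\triangleq \|{\bf H}_{11}{\bf w}\|^2$, this becomes $E_{11}(\theta_1)=\sigma_{11,1}^2+(c-\sigma_{11,1}^2)\sin^2(\theta_1)$. Because $\sigma_{11,1}^2$ is the maximum Rayleigh quotient of ${\bf H}_{11}^H{\bf H}_{11}$ over unit vectors, and ${\bf w}$ is a unit vector, we have $c\leq\sigma_{11,1}^2$. Differentiating yields $E_{11}'(\theta_1)=(c-\sigma_{11,1}^2)\sin(2\theta_1)$, which is non-positive on $[0,\phi_E]$ since $\phi_E=\cos^{-1}|{\bf v}_E^H{\bf v}_L|\leq\pi/2$ forces $\sin(2\theta_1)\geq 0$. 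Hence $E_{11}(\theta_1)$ is monotonically decreasing in $\theta_1$, proving the proposition.

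The only subtlety is that the derivation leans entirely on ${\bf v}_E$ being the top eigenvector of ${\bf H}_{11}^H{\bf H}_{11}$; if a generic unit vector replaced ${\bf v}_E$, the cross term would not vanish and monotonicity could fail. Beyond flagging that point, no computational obstacle remains.
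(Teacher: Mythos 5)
Your proof is correct and follows essentially the same route as the paper's: both expand $\|{\bf H}_{11}{\bf v}_G(\theta_1)\|^2$, use the fact that ${\bf v}_E=[{\bf V}_{11}]_1$ makes the cross term vanish (equivalently, ${\bf H}_{11}{\bf v}_E\perp{\bf H}_{11}({\bf v}_E)^{\perp}{\bf u}_L$), and then conclude from $\|{\bf H}_{11}({\bf v}_E)^{\perp}{\bf u}_L\|^2\leq\sigma_{11,1}^2$ together with $\theta_1\in[0,\phi_E]\subseteq[0,\pi/2]$. The only cosmetic difference is that you argue via the derivative and the Rayleigh-quotient bound, while the paper expands $({\bf v}_E)^{\perp}{\bf u}_L$ in the remaining right singular vectors and compares $\sin^2$ values at two angles; the conclusions are identical.
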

\begin{prop}\label{prop3_IN} The function $IN_{21}(\theta_1)\triangleq tr\left({\bf H}_{21}{\bf v}_G(\theta_1){\bf v}_G^H(\theta_1){\bf H}_{21}^H\right)$ is monotonically decreasing with respect to ${\theta}_1$ for the Geodesic energy beamforming.
\end{prop}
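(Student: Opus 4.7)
My plan is to exploit the specific form of the geodesic parametrization in (\ref{Partial4}), which expresses ${\bf v}_G(\theta_1) = \cos\theta_1 \cdot {\bf a} + \sin\theta_1 \cdot {\bf b}$ with ${\bf a} = u_E {\bf v}_E$ and ${\bf b} = -({\bf v}_E)^{\perp}{\bf u}_L$ orthonormal. Substituting into $IN_{21}(\theta_1) = {\bf v}_G^H {\bf H}_{21}^H {\bf H}_{21} {\bf v}_G$ and applying the double-angle identities $\cos^2\theta_1 = (1+\cos 2\theta_1)/2$, $\sin^2\theta_1 = (1-\cos 2\theta_1)/2$, $2\sin\theta_1\cos\theta_1 = \sin 2\theta_1$, I would rewrite $IN_{21}$ in the form
\begin{equation}
IN_{21}(\theta_1) = K_0 + K_1 \cos(2\theta_1) + K_2 \sin(2\theta_1) = K_0 + R \cos(2\theta_1 - \psi),
\end{equation}
for real constants $K_0, K_1, K_2$ and some $R \geq 0$, $\psi \in \mathbb{R}$. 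Thus, viewed as a function on $\mathbb{R}$, $IN_{21}$ is a shifted sinusoid of period $\pi$ in $\theta_1$.

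The crucial step is then to pin down the phase $\psi$. By the geodesic construction, ${\bf v}_G(\phi_E) = {\bf v}_L = [{\bf V}_{21}]_M$, which is the right singular vector of ${\bf H}_{21}$ associated with its smallest singular value $\sigma_{21,M}$. Hence $IN_{21}(\phi_E) = \sigma_{21,M}^2$ is the global minimum of $\|{\bf H}_{21}{\bf v}\|^2$ over all unit vectors ${\bf v}$. Since ${\bf v}_G(\theta_1)$ has unit norm for every $\theta_1 \in \mathbb{R}$, this forces $\theta_1 = \phi_E$ to be a global minimum of the sinusoidal extension, so $IN_{21}'(\phi_E) = 0$ and $\cos(2\phi_E - \psi) = -1$, giving $\psi = 2\phi_E - \pi$. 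Substituting yields
\begin{equation}
IN_{21}(\theta_1) = K_0 - R \cos\bigl(2(\theta_1 - \phi_E)\bigr), \qquad IN_{21}'(\theta_1) = 2R \sin\bigl(2(\theta_1 - \phi_E)\bigr).
\end{equation}

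Finally, I would invoke the bound $\phi_E = \cos^{-1}(|{\bf v}_E^H {\bf v}_L|) \in [0, \pi/2]$. For $\theta_1 \in [0, \phi_E]$, the argument $2(\theta_1 - \phi_E)$ lies in $[-\pi, 0]$, so $\sin(2(\theta_1 - \phi_E)) \leq 0$ and hence $IN_{21}'(\theta_1) \leq 0$. The degenerate case $R = 0$ makes $IN_{21}$ constant (trivially monotonically non-increasing), while $\phi_E = 0$ collapses the domain to a single point. The non-negativity $R \geq 0$ is itself a sanity check: one has $K_0 = \tfrac{1}{2}(\|{\bf H}_{21}{\bf a}\|^2 + \|{\bf H}_{21}{\bf b}\|^2) \geq \sigma_{21,M}^2 = K_0 - R$, consistent with the minimum being attained at $\theta_1 = \phi_E$. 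The main subtlety I anticipate is arguing that $\theta_1 = \phi_E$ is a stationary point of the sinusoidal \emph{extension} to $\mathbb{R}$ rather than merely the endpoint of $[0, \phi_E]$; this is precisely what the unit-norm tracing of ${\bf v}_G(\cdot)$ combined with the extremality of ${\bf v}_L$ supplies, and it mirrors the argument used for Proposition \ref{prop3} at the other endpoint $\theta_1 = 0$, where ${\bf v}_E$ plays the analogous role as the maximum singular vector of ${\bf H}_{11}$.
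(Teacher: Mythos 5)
Your proof is correct, but it gets there by a different mechanism than the paper. The paper's proof of Proposition \ref{prop3_IN} is the one-line ``similarly'' at the end of Appendix \ref{appndix1}: it re-expands the geodesic around ${\bf v}_L$ rather than ${\bf v}_E$ (with angle $\phi_E-\theta_1$, cf. the denominator of (\ref{Partial6})), and uses the fact that ${\bf v}_L=[{\bf V}_{21}]_M$ is a right singular vector of ${\bf H}_{21}$, so ${\bf H}_{21}{\bf v}_L$ is orthogonal to ${\bf H}_{21}({\bf v}_L)^{\perp}{\bf u}_E$ and the cross term vanishes, exactly as in (\ref{Prop1_9}); this gives $IN_{21}(\theta_1)=\cos^2(\phi_E-\theta_1)\sigma_{21,M}^2+\sin^2(\phi_E-\theta_1)\sum_i\beta_i^2\sigma_{21,i}^2$ with $\sigma_{21,M}^2$ the smallest singular value squared, and monotonicity follows by the same weight-shifting comparison as for $E_{11}$. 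You instead keep the parametrization around ${\bf v}_E$, accept the nonzero cross term, observe that any quadratic form along a unit-speed geodesic is a period-$\pi$ sinusoid $K_0+R\cos(2\theta_1-\psi)$, and pin the phase by the extremality of ${\bf v}_L$: since ${\bf v}_G(\cdot)$ traces unit vectors and $IN_{21}(\phi_E)=\sigma_{21,M}^2$ is the global minimum of $\|{\bf H}_{21}{\bf v}\|^2$ over the unit sphere, $\theta_1=\phi_E$ must be a global minimum of the extension, forcing $\psi=2\phi_E-\pi$ and hence $IN_{21}'\leq 0$ on $[0,\phi_E]$ because $\phi_E\leq\pi/2$. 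The two arguments rest on the same structural fact (minimality of ${\bf v}_L$), and indeed produce the same function $K_0-R\cos(2(\theta_1-\phi_E))$; the paper obtains the phase algebraically through the orthogonal decomposition, while you obtain it variationally, which has the side benefit of handling the degenerate case $R=0$ explicitly (where the paper's strict inequality would also degenerate) and of making transparent why the maximum/minimum singular vectors play symmetric roles at the two endpoints in Propositions \ref{prop3} and \ref{prop3_IN}. The only caveat is that your conclusion is non-strict monotonicity, whereas the paper asserts strict decrease (which requires the generic condition that the relevant singular values are not all equal); this is a negligible difference in strength.
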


To evaluate the achievable region, we jointly optimize $P_1$, $\theta_1$ and ${\bf Q}_2$. The following lemma and proposition are useful in finding the optimal $\theta$ at the boundary points of the achievable R-E region.

\begin{lem}\label{lem1} For a positive semi-definite matrices ${\bf X}$ and ${\bf S}$ (${\bf S}\neq {\bf 0}$), let
\begin{eqnarray}\label{Prop1_1}
f({\bf X}) \triangleq \log \det({\bf I}_{M} + {\bf S}({\bf I}_M + {\bf X})^{-1} ).
\end{eqnarray}
Then, the maximization of $f({\bf X})$ with respect to ${\bf X}$ is equivalent with the minimization of $\det({\bf I}_{M} +{\bf X})$ with respect to ${\bf X}$.
\end{lem}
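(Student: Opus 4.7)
The plan is to recast $f(\mathbf{X})$ in a form where its dependence on $\mathbf{X}$ separates cleanly from the ``signal'' matrix $\mathbf{S}$. Specifically, I would apply the identity
\begin{equation*}
\mathbf{I}_M + \mathbf{S}(\mathbf{I}_M+\mathbf{X})^{-1} = (\mathbf{I}_M+\mathbf{X}+\mathbf{S})(\mathbf{I}_M+\mathbf{X})^{-1},
\end{equation*}
which is valid because $\mathbf{I}_M+\mathbf{X}\succ\mathbf{0}$ for $\mathbf{X}\succeq\mathbf{0}$, and take log-determinants to obtain
\begin{equation*}
f(\mathbf{X}) = \log\det(\mathbf{I}_M+\mathbf{X}+\mathbf{S}) - \log\det(\mathbf{I}_M+\mathbf{X}).
\end{equation*}
This splits the objective into a ``numerator'' and ``denominator'' piece that are each amenable to monotonicity arguments.

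Next, I would establish that $f$ is monotonically nonincreasing in $\mathbf{X}$ in the L\"owner order, while $\det(\mathbf{I}_M+\mathbf{X})$ is monotonically increasing. For the former, computing the matrix gradient gives
\begin{equation*}
\nabla_{\mathbf{X}} f(\mathbf{X}) = (\mathbf{I}_M+\mathbf{X}+\mathbf{S})^{-1} - (\mathbf{I}_M+\mathbf{X})^{-1} \preceq \mathbf{0},
\end{equation*}
since $\mathbf{S}\succeq\mathbf{0}$ implies $(\mathbf{I}_M+\mathbf{X}+\mathbf{S})^{-1}\preceq(\mathbf{I}_M+\mathbf{X})^{-1}$; integrating along any PSD direction shows $f$ cannot increase when $\mathbf{X}$ grows in the L\"owner sense. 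The monotonicity of $\det(\mathbf{I}_M+\mathbf{X})$ in $\mathbf{X}$ is standard. Combining the two facts, along any family of feasible $\mathbf{X}$'s that can be traversed by increments in the PSD cone, the direction that maximizes $f$ is exactly the direction that minimizes $\det(\mathbf{I}_M+\mathbf{X})$, which establishes the claimed equivalence of the two optimization problems.

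The main obstacle I expect is the usual one when working with the L\"owner order: it is only a partial order, so ``equivalence'' of the two extremizations must be interpreted carefully on the feasible set that will be used in the sequel. In the application of this lemma the relevant family of $\mathbf{X}$'s is the rank-one interference covariance $P_1\mathbf{H}_{21}\mathbf{v}_G(\theta_1)\mathbf{v}_G^H(\theta_1)\mathbf{H}_{21}^H$ generated by the Geodesic beamformer, for which Propositions~\ref{prop3} and~\ref{prop3_IN} supply the missing monotonicity in the scalar parameter $\theta_1$, so that the monotone relationship between $f$ and $\det(\mathbf{I}_M+\mathbf{X})$ reduces to a totally ordered one. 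I would therefore close the argument by appealing to those propositions so that the ``equivalence'' becomes an honest one-to-one correspondence of optimizers on the parametric Geodesic family.
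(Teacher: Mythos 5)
The paper never actually proves this lemma --- its proof is declared ``straightforward'' and omitted --- so there is no argument of record to compare yours against; judged on its own terms, the computational core of your proposal is correct and is presumably what the authors had in mind. The factorization $\mathbf{I}_M+\mathbf{S}(\mathbf{I}_M+\mathbf{X})^{-1}=(\mathbf{I}_M+\mathbf{X}+\mathbf{S})(\mathbf{I}_M+\mathbf{X})^{-1}$, the resulting split $f(\mathbf{X})=\log\det(\mathbf{I}_M+\mathbf{X}+\mathbf{S})-\log\det(\mathbf{I}_M+\mathbf{X})$, the fact that $f$ is nonincreasing in the L\"owner order (via $(\mathbf{I}_M+\mathbf{X}+\mathbf{S})^{-1}\preceq(\mathbf{I}_M+\mathbf{X})^{-1}$), and the fact that $\det(\mathbf{I}_M+\mathbf{X})$ is nondecreasing are all sound.

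The gap is in your closing step. Joint monotonicity with respect to a partial order forces the two problems to share optimizers only on feasible sets that are chains, and the literal ``equivalence'' fails on general PSD sets: with $\mathbf{S}=\mathrm{diag}(1,0)$, $\mathbf{X}_a=\mathrm{diag}(1,0)$, $\mathbf{X}_b=\mathrm{diag}(0,2)$ one has $\det(\mathbf{I}_2+\mathbf{X}_a)=2<3=\det(\mathbf{I}_2+\mathbf{X}_b)$ yet $f(\mathbf{X}_a)=\log(3/2)<\log 2=f(\mathbf{X}_b)$, so the $f$-maximizer and the determinant-minimizer are different points. Your proposed repair --- citing Propositions~\ref{prop3} and~\ref{prop3_IN} so that the Geodesic family becomes ``totally ordered'' --- does not deliver what you need: those propositions order the scalars $\|\mathbf{H}_{11}\mathbf{v}_G(\theta_1)\|^2$ and $\|\mathbf{H}_{21}\mathbf{v}_G(\theta_1)\|^2$ in $\theta_1$, not the matrices $P_1\mathbf{H}_{21}\mathbf{v}_G(\theta_1)\mathbf{v}_G^H(\theta_1)\mathbf{H}_{21}^H$ in the L\"owner order, and rank-one matrices with different directions are L\"owner-incomparable, so the matrix monotonicity of $f$ you established cannot be pushed along $\theta_1$. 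The setting in which your argument is airtight is $\mathbf{X}=c\,\mathbf{X}_0$ with $\mathbf{X}_0\succeq\mathbf{0}$ fixed and only the scalar $c\ge 0$ varying (e.g.\ a fixed interference direction with varying $P_1$): that is a chain, both objectives are monotone in $c$, and the L\"owner-smallest feasible point solves both problems simultaneously. That restricted form is also all that is safely supported where the lemma is invoked in Appendix~\ref{appndix3} through $\det(\mathbf{I}_M+\mathbf{H}_{21}\mathbf{Q}_1\mathbf{H}_{21}^H)=1+P_1\|\mathbf{H}_{21}\mathbf{v}_G(\theta_1)\|^2$; comparisons across different $\theta_1$ there are handled by explicitly rescaling $P_1$, not by the lemma alone. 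State and prove the lemma on such a ray (or chain) and your proof is complete for the purpose it serves in the paper.
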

\begin{proof}
The proof is straightforward; thus it is omitted.
\end{proof}

\begin{prop}\label{prop2} The optimal ${\theta}_1^o$ yielding the boundary point of the achievable $C_{R-E}$ for the Geodesic energy beamforming is given by
\begin{eqnarray}\label{prop2_1}
\theta_1^o =  \underset{0 \leq \theta_1 \leq \phi_0}{\arg} \max \eta(\theta_1) \triangleq \frac{\|{\bf H}_{11}{\bf v}_G(\theta_1)\|^2 }{\|{\bf H}_{21}{\bf v}_G(\theta_1)\|^2},
\end{eqnarray}
where $\phi_0$ is the largest angle satisfying $ P\|{\bf H}_{11}{\bf v}_G(\phi_0)\|^2 = \bar E - E_{12}$. Note that the transferred energy from IAP, $E_{12}$, is upper bounded as $E_{12} \leq P \sigma_{12,1}^2$. If $\theta_1$ is larger than $\phi_0$, resulting in small $E_{11}$, there exists no feasible solution of (P1) to satisfy the constraint (\ref{oneIDoneEHmax3_1}).
\end{prop}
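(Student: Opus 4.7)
The plan is to show that the joint maximization over $(P_1, \theta_1, {\bf Q}_2)$ defining a boundary point of $C_{R-E}$ decouples into (i) water-filling of ${\bf Q}_2$ given $(P_1,\theta_1)$ and (ii) a one-dimensional rule for $\theta_1$ of the stated form. The key step is to rewrite $R_2$ in a form where Lemma \ref{lem1} applies. Using Sylvester's determinant identity,
\[
R_2 = \log\det({\bf I}_M + {\bf S}({\bf I}_M + {\bf X})^{-1}),
\]
where ${\bf S}\triangleq{\bf H}_{22}{\bf Q}_2{\bf H}_{22}^H$ and ${\bf X}\triangleq P_1{\bf H}_{21}{\bf v}_G(\theta_1){\bf v}_G^H(\theta_1){\bf H}_{21}^H$. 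Lemma \ref{lem1} then asserts that, for any fixed ${\bf Q}_2$, maximizing $R_2$ over $(P_1,\theta_1)$ is equivalent to minimizing $\det({\bf I}_M+{\bf X})=1+P_1\|{\bf H}_{21}{\bf v}_G(\theta_1)\|^2$, where the last equality exploits the rank-one structure of ${\bf X}$.

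Next I would argue that the energy constraint (\ref{oneIDoneEHmax3_1}) is active at any boundary point with $\bar E>E_{12}$: if it were slack, one could strictly decrease $P_1$, which strictly decreases the interference term $P_1\|{\bf H}_{21}{\bf v}_G\|^2$ and therefore strictly increases $R_2$, contradicting the boundary property. Imposing $P_1\|{\bf H}_{11}{\bf v}_G(\theta_1)\|^2=\bar E-E_{12}$ and substituting gives
\[
P_1\|{\bf H}_{21}{\bf v}_G(\theta_1)\|^2 = \frac{\bar E-E_{12}}{\eta(\theta_1)},
\]
so minimizing the interference over $\theta_1$ is equivalent to maximizing $\eta(\theta_1)$.

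The remaining step is to pin down the feasibility range for $\theta_1$. Using the binding energy equality, the power constraint $P_1\le P$ becomes $\|{\bf H}_{11}{\bf v}_G(\theta_1)\|^2\ge (\bar E-E_{12})/P$. By Proposition \ref{prop3}, $\theta_1\mapsto \|{\bf H}_{11}{\bf v}_G(\theta_1)\|^2$ is monotonically decreasing, so this constraint is equivalent to $\theta_1\le\phi_0$, where $\phi_0$ is the unique angle satisfying $P\|{\bf H}_{11}{\bf v}_G(\phi_0)\|^2=\bar E-E_{12}$; for any $\theta_1>\phi_0$, the required $P_1$ exceeds $P$ and (P1) becomes infeasible. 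The bound $E_{12}\le P\sigma_{12,1}^2$ is immediate from $E_{12}=tr({\bf H}_{12}{\bf Q}_2{\bf H}_{12}^H)\le \sigma_{12,1}^2\,tr({\bf Q}_2)\le P\sigma_{12,1}^2$.

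The main obstacle I anticipate is cleanly justifying that the Lemma \ref{lem1} reduction, which nominally holds for fixed ${\bf Q}_2$, still characterizes $\theta_1^o$ at the joint optimum. This is resolved by observing that the reduced objective $1+P_1\|{\bf H}_{21}{\bf v}_G(\theta_1)\|^2$ is independent of ${\bf Q}_2$, while the feasibility interval $[0,\phi_0]$ depends on ${\bf Q}_2$ only through the scalar $E_{12}$. Consequently, the pointwise argmax rule in (\ref{prop2_1}) is exactly the correct inner step executed within the iterative joint procedure of Algorithm 1, yielding the claimed characterization of the boundary point.
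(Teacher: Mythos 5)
Your proof is correct and reaches the proposition by a genuinely different route than the paper's appendix proof, although both start from the same Lemma~\ref{lem1} reduction: for fixed ${\bf Q}_2$, comparing rates across $(P_1,\theta_1)$ reduces to comparing $\det({\bf I}_M+{\bf X})=1+P_1\|{\bf H}_{21}{\bf v}_G(\theta_1)\|^2$. From there the paper runs a Pareto-dominance contradiction: it supposes some $\theta_1'$ with $\eta(\theta_1')>\eta(\theta_1^o)$ and, via a case analysis on whether the interference at $\theta_1'$ is larger or smaller (invoking the monotonicity Propositions~\ref{prop3} and~\ref{prop3_IN}), rescales $P_1$ to construct a point with at least the same rate and strictly more harvested energy (or more of both), contradicting boundary optimality. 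You instead argue that the energy constraint is active at any boundary point with $\bar E>E_{12}$, substitute the binding value $P_1=(\bar E-E_{12})/\|{\bf H}_{11}{\bf v}_G(\theta_1)\|^2$ into the interference scalar to obtain $(\bar E-E_{12})/\eta(\theta_1)$, and read off the argmax-$\eta$ rule directly; Proposition~\ref{prop3} is then used only to convert $P_1\le P$ into the interval $[0,\phi_0]$, and you also derive the bound $E_{12}\le P\sigma_{12,1}^2$, which the paper states but does not re-derive. Your route makes it more transparent why the ratio $\eta$ is the right objective and where $\phi_0$ comes from; the paper's exchange argument avoids having to justify constraint activeness (your activeness step implicitly needs $\|{\bf H}_{21}{\bf v}_G(\theta_1)\|^2>0$ so that lowering $P_1$ strictly improves the rate; the zero-interference direction is a degenerate case where $\eta$ is unbounded and the rule holds trivially). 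Two minor points: the inner step you describe is that of Algorithm~2 (Algorithm~1 keeps the energy beamformer fixed), and both your argument and the paper's inherit the granularity of Lemma~\ref{lem1}, which compares interference only through the scalar $1+P_1\|{\bf H}_{21}{\bf v}_G(\theta_1)\|^2$ rather than the full rank-one interference covariance.
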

\begin{proof}
See Appendix \ref{appndix3}.
\end{proof}
Note that the range of $\theta_1$ (specifically, $\phi_0$) depends on $E_{12}$. Accordingly, $\theta_1$ can be jointly optimized together with $P_1$ and ${\bf Q}_2$ by modifying Algorithm 1 as:

\vspace*{2pt}Algo. 2. {\it{\underline{Iterative algorithm for the
achievable R-E region}}}\\{\it{\underline{for Geodesic energy beamforming:}}}
\begin{enumerate}
\item Initialize $n=0$ and determine $\phi_0^{(0)}$ such as $\underset{\phi_0}{\arg}\min\left|P tr\left({\bf H}_{11}{\bf v}_G(\phi_0){\bf v}_G^H(\phi_0){\bf H}_{11}^H\right)  -\bar E \right|$.
\item For $n=0:N_{max}$
\!\!\!\begin{enumerate}
\item Find $\theta_1^{(n)}\in [0, \phi_0^{(n)}]$ as (\ref{prop2_1}) and update ${\bf Q}_1^{(n)} = P {\bf v}_G(\theta_1^{(n)}){\bf v}_G^H(\theta_1^{(n)})$ and
\begin{eqnarray}\label{ReviseAlgoGeo2_1}
E_{11}^{(n)} = tr({\bf H}_{11}{\bf Q}_1^{(n)}{\bf H}_{11}^H),\nonumber\\ {\bf R}_{-2}^{(n)} = {\bf I}_{M}+ {\bf H}_{21}{\bf Q}_1^{(n)}{\bf H}_{21}^H,
\end{eqnarray}
\item Find ${\bf Q}_2^{(n)}$ and $P_1^{(n)}$ by using Algorithm 1 with (\ref{ReviseAlgoGeo2_1}).
\item Then, update $\phi_0^{(n)}$ such that
\begin{eqnarray}\label{ReviseAlgoGeo2_4}
P tr\left({\bf H}_{11}{\bf v}_G(\phi_0^{(n)}){\bf v}_G^H(\phi_0^{(n)}){\bf H}_{11}^H\right) =\nonumber\\ \bar E - tr({\bf H}_{12}{\bf Q}_2^{(n)}{\bf H}_{12}^H).
\end{eqnarray}
\end{enumerate}
\item Finally, with $E_{11}^{(N_{max})} = P_1^{(N_{max})} \|{\bf H}_{11}{\bf v}_G(\theta_1^{(N_{max})})\|^2$, the boundary point of the achievable R-E region is given as
\begin{eqnarray}\label{ReviseAlgoGeo2_5}
(R, E) =(\log \det
\left({\bf I}_{M} + \tilde{\bf H}_{22}{\bf Q}_2^{(N_{max})}\tilde{\bf
H}_{22}^H \right),\nonumber\\ E_{11}^{(N_{max})} + tr \left({\bf H}_{12} {\bf
Q}_2^{(N_{max})} {\bf H}_{12}^H\right)).
\end{eqnarray}
\end{enumerate}
 \vspace*{2pt}
\begin{remark}\label{remark1}
From Proposition \ref{prop3}, $E_{11}$ is monotonically decreasing with respect to ${\theta}_1$ and accordingly, in Step 2.a) of Algorithm 2, if $\theta_1$ is larger than $\phi_0^{(n)}$, resulting in small $E_{11}$, there exists no feasible solution of (P1) to satisfy the constraint (\ref{oneIDoneEHmax3_1}). Thanks to Proposition \ref{prop3}, in Step 1 and Step 2.c of Algorithm 2, $\phi_0^{(n)}$ can be efficiently found by using the bisection method \cite{SBoyd}.
\end{remark}
\begin{remark}\label{remark2}
Note that the maximization of $\eta (\theta_1)$ in (\ref{prop2_1}) is analogous to the SLER beamforming. For example, when the required energy $\bar E$ at the EH MS is large, the upper bound of $\theta_1$ in (\ref{prop2_1}) decreases, which implies that the geodesic beamforming becomes close to MEB vector. This observation can also be found in the SLER beamforming. That is, when the required harvested energy is large, the matrix ${\bf H}_{21}^H\!{\bf H}_{21} +{max\!({\bar E}/{P} \!-\!\|{\bf
H}_{11}\|^2\! ,0\!)\!}{\bf I}_M$ in the denominator
of (\ref{GSVD1}) approaches an identity matrix multiplied by a
scalar and the SLER maximizing beamforming is equivalent
with the MEB in (\ref{MEB}). However, while the SLER beamforming requires the full CSIT of the direct/cross links at the energy transmitter, the geodesic beamforming requires only two unit-norm vectors of ${\bf v}_E$ and ${\bf v}_L$, which can be efficiently quantized using a codebook relying on random vector quantization \cite{N_Jindal2} or Grassmannian line packing \cite{Love3}. Note that while Algorithm 1 optimizes $P_1$ and ${\bf Q}_2$ for a fixed energy beamformer, Algorithm 2 can optimize the energy beamforming as well in a distributed manner based on local CSIT.
\end{remark}
\begin{remark}\label{remark3}
Together with (\ref{Prop1_9}) in Appendix A, $ \eta(\theta_1)$ in (\ref{prop2_1}) can be rewritten as
\begin{eqnarray}\label{Partial6}
\!\!\!\eta(\theta_1) \!=\! \!\frac{ \cos^2(\theta_1)\sigma_{11,1}^2 +\sin^2(\theta_1) \| {\bf H}_{11}({\bf v}_E)^{\perp}{\bf u}_L\|^2 }{ \cos^2(\!\phi_E \!-\! \theta_1\!)\sigma_{21,M}^2 \!+\!\sin^2(\!\phi_E \!-\! \theta_1\!) \| {\bf H}_{21}\!({\bf v}_L)^{\!\perp}\!{\bf u}_E\|^2   \!}\!.\!\!\nonumber\\\!\!\!\!\!\!\!\!\!\!
\end{eqnarray}
Accordingly, to find the optimal $\theta_1^o$, the energy transmitter needs to know four additional scalar values of $\sigma_{11,1}^2 $, $\sigma_{21,M}^2$, $ \| {\bf H}_{11}({\bf v}_E)^{\perp}{\bf u}_L\|^2$, and $ \| {\bf H}_{21}({\bf v}_L)^{\perp}{\bf u}_E\|^2 $. The last two of them can be evaluated at each MS from two different reference signals (with $({\bf v}_E)^{\perp}{\bf u}_L$ and $({\bf v}_L)^{\perp}{\bf u}_E$, respectively) of the energy transmitter and reported back to the energy transmitter. Note that, similarly to the way of estimating the numerator of (\ref{Partial6}), in (\ref{ReviseAlgoGeo2_1}) (resp. (\ref{ReviseAlgoGeo2_4})), $E_{11}^{(n)}$ (resp. $tr\left({\bf H}_{11}{\bf v}_G(\phi_0^{(n)}){\bf v}_G^H(\phi_0^{(n)}){\bf H}_{11}^H\right)$) can be evaluated at the energy transmitter with  $\sigma_{11,1}^2 $ and $ \| {\bf H}_{11}({\bf v}_E)^{\perp}{\bf u}_L\|^2$, while the information of ${\bf R}_{-2}^{(n)}$ in (\ref{ReviseAlgoGeo2_1}) is not required at the energy transmitter.
\end{remark}
Note that, the information transmitter still requires CSIT of its direct/cross links to solve (P1) for ${\bf Q}_2^{(n)}$ for given $E_{11}^{(n)}$ and ${\bf R}_{-2}^{(n)}$ in Step 2 of Algorithm 1. In Section \ref{sec:PartialCSIT_IDEH}, to further reduce the feedback overhead to the information transmitter as well, we also propose the geodesic information beamforming by introducing an additional rank one constraint on the information transmitter.

\subsection{Optimality of Geodesic energy beamforming for Rate-Energy region of two-user MIMO IFC}\label{ssec:optimality1}
Motivated by Proposition \ref{prop2}, together with Proposition 2 (rank-one optimality) in \cite{ParkBruno} or Corollary 1 in \cite{ParkClerckx2}, we can derive the following theorem that gives us very important insights into the beamforming strategy that yields the optimal boundary of the achievable rate-energy region in (\ref{oneIDoneEH_1}). Note that for two-user IFC, while \cite{ParkBruno} is only focused on low/high SNR, \cite{ParkClerckx2} addresses the rank-1 optimality for any SNR region.

\begin{thm}\label{thm1} For two-user MIMO IFC (one energy transceiver and one information transceiver), the optimal energy beamforming vector that yields the optimal boundary of the achievable rate-energy region in (\ref{oneIDoneEH_1}) lies in the Geodesic curve between $[{\bf V}_{11}]_1(\triangleq{\bf v}_E )$ and $[{\bf V}_{21}]_M(\triangleq{\bf v}_L )$.
\end{thm}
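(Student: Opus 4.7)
My plan is to combine the rank-one optimality of the energy covariance with a Pareto-efficiency argument that identifies the optimal beamforming direction on the Geodesic. First, I invoke Proposition 2 of \cite{ParkBruno} (equivalently, Corollary 1 of \cite{ParkClerckx2}) to reduce ${\bf Q}_1$ to a rank-one covariance ${\bf Q}_1^\star = P_1 {\bf v}^\star ({\bf v}^\star)^H$ with $\|{\bf v}^\star\|=1$ and $0 \le P_1 \le P$, so that the R-E boundary of (\ref{oneIDoneEH_1}) is traced by the triple $({\bf v}^\star, P_1, {\bf Q}_2)$ and the question becomes: which unit vectors ${\bf v}^\star$ can be R-E-optimal?

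Next, I argue that any R-E-optimal ${\bf v}^\star$ must be Pareto-efficient in the pair of scalar objectives $(\|{\bf H}_{11}{\bf v}\|^2,-\|{\bf H}_{21}{\bf v}\|^2)$ over the unit sphere. Under the rank-one structure, ${\bf v}^\star$ enters (\ref{oneIDoneEHmax1}) only through $\|{\bf H}_{11}{\bf v}^\star\|^2$, which governs the energy constraint (\ref{oneIDoneEHmax3_1}), and through the rank-one interference covariance ${\bf R}_{-2}={\bf I}+P_1{\bf H}_{21}{\bf v}^\star({\bf v}^\star)^H{\bf H}_{21}^H$, which governs $R_2$. Decreasing $\|{\bf H}_{21}{\bf v}^\star\|^2$ shrinks the nonzero eigenvalue of ${\bf R}_{-2}-{\bf I}$ and, after re-optimizing ${\bf Q}_2$ via the water-filling formula (\ref{oneIDoneEHmax6}), weakly increases the achievable $R_2$; increasing $\|{\bf H}_{11}{\bf v}^\star\|^2$ only loosens the energy constraint. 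Hence if ${\bf v}^\star$ were dominated by some $\bar{\bf v}$ in these two scalar objectives, replacing ${\bf v}^\star$ by $\bar{\bf v}$ (with the same $P_1$ and re-optimized ${\bf Q}_2$) would yield a strictly better $(R_2,E)$ pair, contradicting boundary optimality.

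The final step is to identify the Pareto frontier of $(\|{\bf H}_{11}{\bf v}\|^2,\|{\bf H}_{21}{\bf v}\|^2)$ over the unit sphere with the Geodesic curve (\ref{Partial5}). The two endpoints are immediate: ${\bf v}_E=[{\bf V}_{11}]_1$ uniquely maximizes $\|{\bf H}_{11}{\bf v}\|^2$, while ${\bf v}_L=[{\bf V}_{21}]_M$ uniquely minimizes $\|{\bf H}_{21}{\bf v}\|^2$. Along the Geodesic ${\bf v}_G(\theta_1)$ joining them, Propositions \ref{prop3} and \ref{prop3_IN} guarantee that both $\|{\bf H}_{11}{\bf v}_G(\theta_1)\|^2$ and $\|{\bf H}_{21}{\bf v}_G(\theta_1)\|^2$ are monotonically decreasing in $\theta_1\in[0,\phi_E]$, producing a 1-parameter family of Pareto-efficient beamformers that continuously interpolates between the two extremes; Proposition \ref{prop2} then pins down the specific $\theta_1^o$ realizing each boundary R-E pair by maximizing $\eta(\theta_1)$ subject to the energy constraint. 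The main technical obstacle lies in the Pareto-dominance step: since ${\bf R}_{-2}$ depends not only on $\|{\bf H}_{21}{\bf v}^\star\|^2$ but also on the direction of ${\bf H}_{21}{\bf v}^\star$ through the effective channel $\tilde{\bf H}_{22}={\bf R}_{-2}^{-1/2}{\bf H}_{22}$, showing rigorously that the optimal $R_2$ (after water-filling) is monotone in the scalar $\|{\bf H}_{21}{\bf v}^\star\|^2$ requires a careful application of Lemma \ref{lem1} together with the rank-one-plus-identity structure of ${\bf R}_{-2}$, and this is where the bulk of the technical work will reside.
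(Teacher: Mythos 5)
Your skeleton parallels the paper's proof (rank-one reduction, a necessary efficiency condition on the beamformer obtained by a replace-and-rescale contradiction, then a geometric identification with the geodesic), and your step 2 is acceptable in the same sense the paper's is: with the rank-one interference covariance, $\det({\bf I}_M+{\bf H}_{21}{\bf Q}_1{\bf H}_{21}^H)=1+P_1\|{\bf H}_{21}{\bf v}\|^2$, so Lemma \ref{lem1} and the power-scaling trick of Appendix B reduce everything to the two scalars $\|{\bf H}_{11}{\bf v}\|^2$ and $\|{\bf H}_{21}{\bf v}\|^2$; the issue you flag as ``the bulk of the technical work'' is exactly what the paper dispatches with Lemma \ref{lem1}, so it is not where the difficulty lies.

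The genuine gap is in your final step. Asserting that the Pareto frontier of $\bigl(\|{\bf H}_{11}{\bf v}\|^2,\;-\|{\bf H}_{21}{\bf v}\|^2\bigr)$ over the unit sphere \emph{is} the geodesic is essentially the statement of Theorem \ref{thm1} itself, and the evidence you offer does not establish it: endpoint extremality of ${\bf v}_E$ and ${\bf v}_L$ plus the along-curve monotonicity of Propositions \ref{prop3} and \ref{prop3_IN} only shows that geodesic points do not dominate one another; it does not exclude that some off-geodesic vector dominates a geodesic point, nor (what is actually needed) that every off-geodesic vector is dominated by some geodesic vector. The paper's proof supplies precisely this missing piece: it expands $\|{\bf H}_{11}{\bf v}\|^2$ and $\|{\bf H}_{21}{\bf v}\|^2$ in terms of the principal angles $\phi_{E1}$ and $\phi_{L1}$ of ${\bf v}$ to ${\bf v}_E$ and ${\bf v}_L$ (using the SVD structure of ${\bf H}_{11}$ and ${\bf H}_{21}$), shows that the first quantity increases as $\phi_{E1}$ decreases and the second decreases as $\phi_{L1}$ decreases, and then uses the fact that $\phi_{E1}+\phi_{L1}\geq\phi_E$ with equality exactly on the geodesic, so for any ${\bf v}$ off the geodesic one can pick a geodesic point with both angles strictly smaller, i.e., strictly larger harvested energy and strictly smaller leakage, contradicting the efficiency condition. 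Without this principal-angle/triangle-inequality argument (or an equivalent substitute), your proposal assumes the conclusion at the decisive step.
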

\begin{proof}
See Appendix \ref{appndix4}.
\end{proof}
\begin{figure}
\begin{center}
\begin{tabular}{c}
\includegraphics[height=3.6cm]{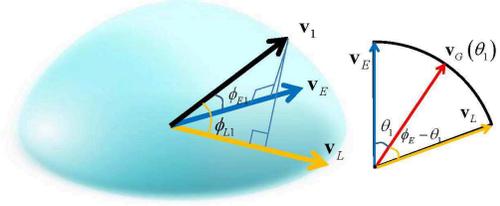}
\end{tabular}
\end{center}
\caption[geodesicOptimal]
{ \label{geodesicOptimal} Geometry of Geodesic beamforming vector.}
\end{figure}
From Theorem \ref{thm1}, because the energy transmitter pursuits two different objectives - maximize the harvesting energy at EH MS and minimizing the interference power to ID MS, if it should have a rank-one beamforming, then the optimal beamforming that yields the optimal boundary of the achievable rate-energy region in (\ref{oneIDoneEH_1}) becomes the Geodesic beamforming.
Interestingly, the optimal energy beamforming vector in (\ref{Partial5}) is a linear combination of MEB (signal maximization) and MLB (interference minimization) vectors and it is reminiscent of the optimal beamforming under the local CSIT in conventional IFC which is a linear combination of a matched filter beamformer (signal maximization) and a zero-forcing beamformer (interference minimization) \cite{LarssonJorswieck}.

\section{Partial local CSIT at both information/energy transmitters}\label{sec:PartialCSIT_IDEH}
Before proposing the geodesic information beamforming, we first present how to optimize the covariance matrices of energy/information transmitters when they both use the rank-one beamforming strategies.

\subsection{Optimization for the achievable Rate-Energy region}
Because both energy/information transmitters have a rank-one beamforming, the achievable rate-energy region is given as:
\begin{eqnarray}\label{PartialCSIT1}
\!&\!C_{R-E} (P) =  \Biggl\{ (R, E) : R = R_2, E =E_{11} + E_{12}, \quad \!&\!\nonumber\\
\!&\!R_2 \!\leq \!\log (1 + P{\bf w}_2^H{\bf H}_{22}^H{\bf
R}_{-2}^{-1}{\bf H}_{22}{\bf w}_2 ), \nonumber\!&\!\\\!&\!E_{12} \!\leq \! P{\bf w}_2^H{\bf H}_{12}^H{\bf
H}_{12} {\bf w}_2, \|{\bf w}_2\|^2 = 1, 0 \leq P_1 \leq P \Biggr\},\!\!\!&\!\!\!
\end{eqnarray}
where $E_{11} = \omega_1 P_1, {\bf R}_{-2} = {\bf I}_{M}+ P_1{\bf \Omega}_{21}$ with $(\omega_1,  {\bf \Omega}_{21})= (\|{\bf H}_{11} {\bf v}_G(\theta_1) \|^2,  {\bf H}_{21} {\bf v}_G(\theta_1) {\bf v}_G^H(\theta_1){\bf H}_{21}^H )$. Here the SLER beamforming is not considered, because the SLER beamforming requires the full CSIT at the energy transmitter. In addition, MLB and MEB can be regarded as a special case of the Geodesic beamforming with $\theta_1 = \{0, \phi_E\}$.

To evaluate the achievable region, we optimize $P_1$, $\theta_1$ and ${\bf w}_2$ under the distributed optimization framework. However, Proposition \ref{prop2} is still valid when the information transmitters have a rank-one beamforming. Therefore, $\theta_1$ can be determined such as (\ref{prop2_1}). Accordingly, we have the following optimization problem for the rate-energy region of
(\ref{PartialCSIT1})
\begin{eqnarray}\label{PartialCSIT2_1}
\!\!(\!P2\!)\! \underset{P_1, {\bf w}_2}{\text{ maximize}}& \log (1 + P{\bf w}_2^H{\bf H}_{22}^H{\bf
R}_{-2}^{-1}{\bf H}_{22}{\bf w}_2 )
\\\label{PartialCSIT2_2} \!\!{\text{subject to}}\!&\!\!P{\bf w}_2^H{\bf H}_{12}^H{\bf
H}_{12} {\bf w}_2 +
E_{11} \geq \bar E, \|{\bf w}_2\|^2 \!=\!1 \!\\\label{PartialCSIT2_3} & 0 \leq P_1 \leq P.
\end{eqnarray}
Because
\begin{eqnarray}\label{PartialCSIT3}
\log (1 + P{\bf w}_2^H{\bf H}_{22}^H{\bf
R}_{-2}^{-1}{\bf H}_{22}{\bf w}_2 )= \nonumber\\ \log (1 +
\frac{P{\bf w}_2^H{\bf H}_{22}^H{\bf H}_{22}{\bf w}_2}{1 + P_1{\bf v}_G^H(\theta_1){\bf H}_{21}^H{\bf H}_{21}{\bf v}_G(\theta_1)} ),
\end{eqnarray}
by letting $\alpha \triangleq {\bf v}_G^H(\theta_1){\bf H}_{21}^H{\bf H}_{21}{\bf v}_G(\theta_1)$, (P2) is equivalent with
\begin{eqnarray}\label{PartialCSIT4_1}
 \underset{P_1, {\bf w}_2}{\text{ maximize}}&\frac{P{\bf w}_2^H{\bf H}_{22}^H{\bf H}_{22}{\bf w}_2}{1 + \alpha P_1}
\\\label{PartialCSIT4_2} \!\!{\text{subject to}}\!&\!\!P{\bf w}_2^H{\bf H}_{12}^H{\bf
H}_{12} {\bf w}_2 \!+\!
\omega_1 P_1 \geq \bar E, \|{\bf w}_2\|^2 \!=\!1 \!\\\label{PartialCSIT4_3} & 0 \leq P_1 \leq P.
\end{eqnarray}
By introducing a new variable,
\begin{eqnarray}\label{PartialCSIT4_1_revise}
&E_h \triangleq \omega_1 P_1 + P{\bf w}_2^H{\bf H}_{12}^H{\bf
H}_{12} {\bf w}_2 \nonumber&\\&({\text{ equivalently, }}P_1 =\frac{1}{\omega_1}(E_h - P{\bf w}_2^H{\bf H}_{12}^H{\bf
H}_{12} {\bf w}_2)),&
\end{eqnarray}
we have
\begin{eqnarray}\label{PartialCSIT5_1}
\!&\!(\!P2a\!)\! \underset{E_h, {\bf w}_2}{\text{ maximize}}\quad\frac{P{\bf w}_2^H{\bf H}_{22}^H{\bf H}_{22}{\bf w}_2}{1 + \frac{\alpha}{\omega_1}\left(E_h - P{\bf w}_2^H{\bf H}_{12}^H{\bf H}_{12}{\bf w}_2\right)}
&\!\\\!&\label{PartialCSIT5_2} \!\!{\text{subject to}}\quad\!\! E_h \geq \bar E ,\|{\bf w}_2\|^2 =1\!&\\\label{PartialCSIT5_3} &  P{\bf w}_2^H{\bf H}_{12}^H{\bf H}_{12}{\bf w}_2 \leq E_h \leq \omega_1 P + P{\bf w}_2^H{\bf H}_{12}^H{\bf
H}_{12} {\bf w}_2.\!&\!
\end{eqnarray}
Note that the objective function is monotonic decreasing with respect to $E_h$. Therefore, to maximize (\ref{PartialCSIT5_1}) with respect to $E_h$, $E_h$ can be replaced by its lower bound in (\ref{PartialCSIT5_1}). From (\ref{PartialCSIT5_2}) and the first inequality of (\ref{PartialCSIT5_3}), when $P{\bf w}_2^H{\bf H}_{12}^H{\bf H}_{12}{\bf w}_2 > \bar E$, the constraint of (\ref{PartialCSIT5_2}) becomes inactive. Then, by substituting $E_h$ with $P{\bf w}_2^H{\bf H}_{12}^H{\bf H}_{12}{\bf w}_2 $ in (\ref{PartialCSIT5_1}), the optimal solution $\bar{\bf w}_2$ of (P2a) becomes an eigen-beamforming on ${\bf H}_{22}$, given as
\begin{eqnarray}\label{PartialCSIT5_5}
\bar{\bf w}_2 = {\bf w}_I, \quad {\bf w}_I = [{\bf V}_{22}]_1,
\end{eqnarray}
where ${\bf V}_{22}$ is an $M\times M$ unitary matrix form the SVD of ${\bf H}_{22}$. The corresponding $P_1$ is equal to $0$, which implies that the energy harvested from the information transmitter is enough to satisfy the target energy $\bar E$ and the energy transmitter does not transmit any signal, therefore not causing any interference to the ID MS.
Next, when $P{\bf w}_2^H{\bf H}_{12}^H{\bf H}_{12}{\bf w}_2 \leq \bar E$, 
the lower bound of (\ref{PartialCSIT5_3}) becomes inactive and by substituting $E_h$ with its lower bound $\bar E$ in (\ref{PartialCSIT5_1}), (P2a) can be rewritten as
\begin{eqnarray}\label{PartialCSIT6_1}
\!\!(\!P2b\!)\! \underset{{\bf w}_2}{\text{ maximize}}&\frac{P{\bf w}_2^H{\bf H}_{22}^H{\bf H}_{22}{\bf w}_2}{1 + \frac{\alpha}{\omega_1}\left(\bar E - P{\bf w}_2^H{\bf H}_{12}^H{\bf H}_{12}{\bf w}_2\right)}
\\\label{PartialCSIT6_2} \!\!{\text{subject to}}\!&\!\! \|{\bf w}_2\|^2 =1\!\\\label{PartialCSIT6_3} &  \bar E - \omega_1 P \leq P{\bf w}_2^H{\bf H}_{12}^H{\bf
H}_{12} {\bf w}_2 \leq \bar E.
\end{eqnarray}
Because ${\bf w}_2^H {\bf Q}{\bf w}_2 = tr ({\bf Q}{\bf W}_2)$ for any matrix ${\bf Q}$ with ${\bf W}_2 = {\bf w}_2 {\bf w}_2^H$, by relaxing the rank constraint of ${\bf W}_2$, we can have the following SDP relaxation problem for (P2b) as
\begin{eqnarray}\label{PartialCSIT7_1}
\!&\!\!(\!P2c\!)\! \underset{{\bf W}_2\succeq {\bf 0}}{\text{ maximize}}\quad\frac{P tr({\bf H}_{22}^H{\bf H}_{22}{\bf W}_2)}{tr \left(\left( (1+ \frac{\alpha}{\omega_1}\bar E) {\bf I} - \frac{\alpha}{\omega_1}P{\bf H}_{12}^H{\bf H}_{12}\right){\bf W}_2\right)}
\!&\!\\\!&\!\label{PartialCSIT7_2} \!\!{\text{subject to}}\!\quad\! tr({\bf W}_2) = 1\!&\!\\\!\!&\!\!\label{PartialCSIT7_3}  P tr({\bf H}_{12}^H{\bf
H}_{12} {\bf W}_2) \!\geq\! \bar E\! -\! \omega_1 P,~ P tr({\bf H}_{12}^H{\bf
H}_{12} {\bf W}_2) \!\leq\! \bar E.\!\!&\!\!
\end{eqnarray}
Note that the objective function in (P2c) is quasi-linear and it can be transformed into a linear program \cite{SBoyd}. That is, by introducing new variables ${\bf W}'_2 =\frac{ {\bf W}_2}{tr\left(\left( (1+ \frac{\alpha}{\omega_1}\bar E) {\bf I} - \frac{\alpha}{\omega_1}P{\bf H}_{12}^H{\bf H}_{12}\right){\bf W}_2 \right) }$ and $z = \frac{1}{tr\left(\left( (1+ \frac{\alpha}{\omega_1}\bar E) {\bf I} - \frac{\alpha}{\omega_1}P{\bf H}_{12}^H{\bf H}_{12}\right){\bf W}_2 \right)}$, (P2c) can be transformed into
\begin{eqnarray}\label{PartialCSIT8_1}
\!&\!\!(\!P2d\!) \underset{{\bf W}'_2\succeq {\bf 0}, z \geq 0}{\text{ maximize}}\quad  P tr({\bf H}_{22}^H{\bf H}_{22}{\bf W}'_2)
\!&\!\\\!\!&\!\label{PartialCSIT8_2} \!{\text{subject to}}\!\!\quad\!\! tr( \left( (1\!+ \!\frac{\alpha}{\omega_1}\bar E) {\bf I} - \frac{\alpha}{\omega_1}P{\bf H}_{12}^H{\bf H}_{12}\right) {\bf W}'_2) \!= \!1\!\!&\!\!\\\!&\!\label{PartialCSIT8_3} \! P tr({\bf H}_{12}^H{\bf
H}_{12} {\bf W}'_2) \geq (\bar E - \omega_1 P) z,\!&\!\\\!\!&\!\!\label{PartialCSIT8_4} P tr({\bf H}_{12}^H{\bf
H}_{12} {\bf W}'_2) \!\leq\! \bar E z,~   tr({\bf W}'_2) - z =0.\!\!&\!\!
\end{eqnarray}
Since Problem (P2d) is convex and satisfies the Slater's condition \cite{SBoyd}, it has a zero duality gap and its Lagrangian function is given as:
\begin{eqnarray}\label{PartialCSIT9_1}
\!\!&\!\!L({\bf W}'_2, z, \lambda, \mu)\! = \! P tr({\bf H}_{22}^H{\bf H}_{22}{\bf W}'_2)+\lambda \bigl(P tr({\bf H}_{12}^H{\bf
H}_{12} {\bf W}'_2)\nonumber\!\!&\!\!\\\!\!&\!\!-(\bar E - \omega_1 P) z\bigr)- \mu \left(P tr({\bf H}_{12}^H{\bf
H}_{12} {\bf W}'_2)- \bar E  z\right) \!\!&\!\!\\\!\!&\!\!= P tr({\bf A}{\bf W}'_2)+ \left(\mu\bar E -\lambda(\bar E - \omega_1 P)  \right)z,\nonumber\!\!&\!\!
\end{eqnarray}
where
 \begin{eqnarray}\label{PartialCSIT9_3}
{\bf A} = {\bf H}_{22}^H{\bf H}_{22} + (\lambda - \mu){\bf H}_{12}^H{\bf H}_{12}.
\end{eqnarray}
Then, the optimal $\bar{\bf W}'_2$ can be obtained by solving the dual problem of (P2d) as $\underset{\lambda, \mu \geq 0}{\min}\underset{{\bf W}'_2\succeq{\bf 0},z\geq 0}{\max}L({\bf W}'_2, z, \lambda, \mu)$ and is given as:
\begin{eqnarray}\label{PartialCSIT10}
\bar{\bf W}'_2 = \bar{\bf w}'_2 \bar{\bf w}_2'^H,~~ \bar{\bf w}'_2 = \frac{1}{\beta}[{\bf U}_{\bf A}]_1,
\end{eqnarray}
where ${\bf U}_{\bf A}$ is a unitary matrix from the EVD of ${\bf A}$ and $\beta$ is a scale factor such that the constraint (\ref{PartialCSIT8_2}) is satisfied. The corresponding $\bar\lambda$ and $\bar\mu$ can also be obtained by using the subgradient-based
method \cite{Zhang1, XZhao}, where the the subgradient is given by $\!(\left(\!P tr({\bf H}_{12}^H{\bf
H}_{12} {\bf W}'_2)\!-\!(\bar E \!-\! \omega_1 P) z\!\right)\!, \left(\!\bar E  z \!- \!P tr({\bf H}_{12}^H{\bf
H}_{12} {\bf W}'_2)\!\right) ) $ with $z =tr({\bf W}'_2)$. That is, the optimal $\bar{\bf W}'_2$ together with $\bar \lambda$ and $\bar \mu$ can be iteratively computed. Because (P2d) is convex, the solution in (\ref{PartialCSIT10}) is globally optimal under the local CSIT with the energy/information beamformers. Then, the optimal $\bar{\bf W}_2$ for (P2c) can be simply computed as:
\begin{eqnarray}\label{PartialCSIT11}
\bar{\bf W}_2 = \bar{\bf w}_2 \bar{\bf w}_2^H,~~ \bar{\bf w}_2 = \frac{1}{\|\bar{\bf w}'_2 \|}\bar{\bf w}'_2.
\end{eqnarray}
Note that because the optimal solution for (P2c) has a rank equal to one, (\ref{PartialCSIT11}) is also optimal for (P2b), (P2a), and (P2) without rank-relaxation. Here, $P_1$ can be determined as $P_1 =\frac{1}{\omega_1}(\bar E - P\bar{\bf w}_2^H{\bf H}_{12}^H{\bf
H}_{12} \bar{\bf w}_2)$. Accordingly, the iterative algorithm for the Geodesic energy beamforming and rank-one information beamforming can be summarized in Algorithm 3.

\vspace*{2pt}Algo. 3. {\it{\underline{Iterative algorithm for the
achievable R-E region}}}\\{\it{\underline{for Geodesic energy beamforming and rank-one information}}}
\\{\it{\underline{beamforming:}}}
\begin{enumerate}
\item Compute $\bar{\bf w}_2$ as in (\ref{PartialCSIT5_5}). If $P\bar{\bf w}_2^H{\bf H}_{12}^H{\bf H}_{12}\bar{\bf w}_2 \geq \bar E$, set $P_1=0$ and terminate the algorithm. Else, initialize $n=0$ and determine $\phi_0^{(0)}$ that minimizes $|P \|{\bf H}_{11} {\bf v}_G(\phi_0^{(0)}) \|^2 - \bar E|$.
\item For $n=0:N_{max}$
\!\!\!\begin{enumerate}
\item Find $\theta_1^{(n)}\in [0, \phi_0^{(n)}]$ as (\ref{prop2_1}), update ${\bf v}_G(\theta_1^{(n)})$, and solve (P2d), resulting in ${\bf Q}_2^{(n)} = P \bar{\bf w}_2^{(n)} (\bar{\bf w}_2^{(n)})^H$.
\item Then, update $\phi_0^{(n)}$ such that
\begin{eqnarray}\label{ReviseAlgoGeo3_2}
P \|{\bf H}_{11} {\bf v}_G(\phi_0^{(n)}) \|^2 = \bar E - {\bf H}_{12}{\bf Q}_2^{(n)}{\bf H}_{12}^H.
\end{eqnarray}
\end{enumerate}
\item Finally, determine the energy transmit power as
$P_1 =\frac{1}{\|{\bf H}_{11} {\bf v}_G(\theta_1^{(n)}) \|^2}(\max\{\bar E - P\|{\bf
H}_{12} \bar{\bf w}_2\|^2,0\})$ and the boundary point of the achievable R-E region is given as
\begin{eqnarray}\label{ReviseAlgoGeo3_5}
(R, E) =(\log \det
\left({\bf I}_{M} + \tilde{\bf H}_{22}{\bf Q}_2^{(N_{max})}\tilde{\bf
H}_{22}^H \right),\nonumber\\~\|{\bf H}_{11} {\bf v}_G(\theta_1^{(N_{max})}) \|^2 + tr \left({\bf H}_{12} {\bf
Q}_2^{(N_{max})} {\bf H}_{12}^H\right)).
\end{eqnarray}
\end{enumerate}
 \vspace*{2pt}
Even though $P_1$ is computed once in Step 3 of Algorithm 3 (c.f., in Step 2 of Algorithm 2, $P_1$ is iteratively updated), (P2d) in Step 2.a is actually optimized with respect to $P_1$, implicitly, because we have replaced $P_1$ as (\ref{PartialCSIT4_1_revise}).
\begin{remark}\label{remark4}
Note that when $P{\bf w}_2'^H{\bf H}_{12}^H{\bf H}_{12}{\bf w}'_2 \leq \bar E z$, from (\ref{PartialCSIT9_1}), $\mu$ minimizing $\underset{{\bf W}'_2\succeq{\bf 0},z\geq 0}{\max}L({\bf W}'_2, z, \lambda, \mu)$  will be zero. In addition, when $\bar E$ is small enough such that $ P tr({\bf H}_{12}^H{\bf
H}_{12} {\bf W}'_2) \approx \bar E z$, the subgradient of $\lambda$ is positive and therefore, the value of $\bar \lambda$ minimizing $\underset{{\bf W}'_2\succeq{\bf 0},z\geq 0}{\max}L({\bf W}'_2, z, \lambda, \mu)$ approaches 0. That is, the optimal $\bar {\bf w}_2$ approaches ${\bf w}_I$ in (\ref{PartialCSIT5_5}). In contrast, when $\bar E$ is large resulting in the subgradient of $\lambda$ being negative, the value of $\bar \lambda$ will increase. That is, the solution approaches
\begin{eqnarray}\label{PartialCSIT10_1}
\bar{\bf w}_2 = {\bf w}_L, \quad {\bf w}_L= [{\bf V}_{12}]_1,
\end{eqnarray}
where ${\bf V}_{12}$ is an $M\times M$ unitary matrix from the SVD of ${\bf H}_{12}$. That is, the optimal $\bar{\bf w}_2$ will approach the beamforming vector such that the energy transferred through ${\bf H}_{12}$ is maximized.
\end{remark}


\subsection{Geodesic based rank-one energy/information beamforming}\label{ssec:bothGeodesic}
Motivated by Remark \ref{remark4}, we can define a Geodesic information beamforming vector ${\bf w}_G(\theta_2)$ with $[{\bf V}_{22}]_1(\triangleq {\bf w}_I)$ and $[{\bf V}_{12}]_1(\triangleq {\bf w}_L)$ as:
\begin{eqnarray}\label{bothG1}
{\bf w}_G(\theta_2) = {\bf w}_I {u}_I \cos(\theta_2) - ({\bf w}_I)^{\perp}{\bf u}_L \sin(\theta_2),
\end{eqnarray}
where $({\bf w}_I)^{\perp}{\bf u}_L =[{\bf w}_I u_I \cos(\phi_I) - {\bf w}_L]^{-1}\sin(\phi_I)^{-1} $. Here, ${\phi}_I$ and ${u}_I$ are the principle angle and the phase difference between $ {\bf w}_I$ and ${\bf w}_L$, respectively, such that $ {\bf w}_I^H{\bf w}_L= u_I\cos{\phi}_I$. Then, we have the following theorem.

\begin{thm}\label{thm2} When the information transmitter opts for the rank-one beamforming, the optimal beamforming vector ${\bf w}_2$ lies in the Geodesic curve between $[{\bf V}_{22}]_1$ and $[{\bf V}_{12}]_1$.
\end{thm}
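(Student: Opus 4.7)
The plan is to mirror the proof strategy of Theorem \ref{thm1} (cf.\ Appendix \ref{appndix4}), adapting it to the rank-one information beamforming setting so that every boundary point of the R-E region in (\ref{PartialCSIT1}) is attained by some ${\bf w}_G(\theta_2)$ of the form (\ref{bothG1}). The key observation is that the KKT/Lagrangian analysis already developed in Section V.A has essentially solved (P2): after the SDP relaxation (P2c) and the linear-program reformulation (P2d), stationarity forces the optimal $\bar{\bf w}_2' \propto [{\bf U}_{\bf A}]_1$, where
\[
{\bf A}(\lambda,\mu) = {\bf H}_{22}^H{\bf H}_{22} + (\lambda - \mu){\bf H}_{12}^H{\bf H}_{12}.
\]
Hence it suffices to show that the principal eigenvector of ${\bf A}(\bar\lambda,\bar\mu)$, for every dual optimum $(\bar\lambda,\bar\mu)$ arising on the R-E boundary, lies in $\mathrm{span}\{{\bf w}_I,{\bf w}_L\}$ and can be written in the Geodesic parameterization (\ref{bothG1}).

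Two limiting regimes, already highlighted in Remark \ref{remark4}, serve as the endpoints of the Geodesic and furnish the anchor cases. When (\ref{PartialCSIT2_2}) is slack, $\bar\lambda\to 0$ and $[{\bf U}_{\bf A}]_1={\bf w}_I={\bf w}_G(0)$. When $\bar E$ is pushed toward the largest feasible value so that the harvested energy through ${\bf H}_{12}$ must be maximized, $\bar\lambda$ grows large and the principal eigenvector of ${\bf A}$ is dictated by the ${\bf H}_{12}^H{\bf H}_{12}$ term, yielding ${\bf w}_L={\bf w}_G(\phi_I)$. By continuity of the principal eigenvector in the scalar $\bar\lambda-\bar\mu$, the optimizer traces a one-parameter curve on the unit sphere connecting these two anchors.

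The core step is to certify that this curve coincides with the Geodesic (\ref{bothG1}), i.e., that $\bar{\bf w}_2$ has no component orthogonal to $\mathrm{span}\{{\bf w}_I,{\bf w}_L\}$. I would argue by contradiction in the spirit of Theorem \ref{thm1}: assume an optimal $\bar{\bf w}_2 = \alpha{\bf w}_I + \beta\tilde{\bf w}_L + \gamma{\bf w}_\perp$ with ${\bf w}_\perp \perp \mathrm{span}\{{\bf w}_I,{\bf w}_L\}$ and $\gamma\neq 0$, where $\tilde{\bf w}_L$ is the normalized component of ${\bf w}_L$ orthogonal to ${\bf w}_I$. Exploiting that ${\bf w}_I$ and ${\bf w}_L$ are the dominant right-singular vectors of ${\bf H}_{22}$ and ${\bf H}_{12}$, together with monotonicity of $\|{\bf H}_{22}{\bf w}_G(\cdot)\|^2$ and $\|{\bf H}_{12}{\bf w}_G(\cdot)\|^2$ analogous to Propositions \ref{prop3} and \ref{prop3_IN}, I would exhibit $\theta_2^\star$ such that both $\|{\bf H}_{22}{\bf w}_G(\theta_2^\star)\|^2 \geq \|{\bf H}_{22}\bar{\bf w}_2\|^2$ and $\|{\bf H}_{12}{\bf w}_G(\theta_2^\star)\|^2 \geq \|{\bf H}_{12}\bar{\bf w}_2\|^2$. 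Since the objective in (\ref{PartialCSIT4_1}) is monotone in the first quantity (larger numerator) and the constraint (\ref{PartialCSIT4_2}) is relaxed by the second (permitting smaller $P_1$, hence smaller $\alpha P_1$ in the denominator), the replacement strictly improves the R-E tradeoff, contradicting optimality and forcing $\gamma=0$. The complex phase $u_I$ in (\ref{bothG1}) is then matched through ${\bf w}_I^H{\bf w}_L = u_I\cos\phi_I$ exactly as in (\ref{Partial2}).

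The hardest ingredient is the simultaneous-dominance construction invoked in the contradiction step: because $|{\bf H}_{22}{\bf w}|^2$ and $|{\bf H}_{12}{\bf w}|^2$ are coupled quadratic forms whose Pareto frontier over the unit sphere need not, a priori, live in a two-dimensional subspace, the argument must leverage the specific KKT weighting $\bar\lambda-\bar\mu$ so that the improvement along ${\bf w}_G(\theta_2^\star)$ is simultaneous in both metrics. After the change of basis to $\{{\bf w}_I,\tilde{\bf w}_L,\ldots\}$ and elimination of the coordinates transverse to $\mathrm{span}\{{\bf w}_I,{\bf w}_L\}$, I expect this to collapse to essentially the same two-dimensional Rayleigh-quotient computation that underlies the proof of Theorem \ref{thm1}.
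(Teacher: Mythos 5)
Your plan contains the right final move --- a contradiction showing that any candidate ${\bf w}_2$ off the geodesic is simultaneously dominated by some ${\bf w}_G(\theta_2^\star)$ in both $\|{\bf H}_{22}\cdot\|^2$ (objective) and $\|{\bf H}_{12}\cdot\|^2$ (energy constraint, hence smaller $P_1$ in the denominator of (\ref{PartialCSIT4_1})) --- and this is indeed how the paper argues. But the step you yourself flag as the hardest ingredient is exactly the step you do not supply, and the tools you cite would not supply it. Monotonicity of $\|{\bf H}_{22}{\bf w}_G(\theta_2)\|^2$ and $\|{\bf H}_{12}{\bf w}_G(\theta_2)\|^2$ \emph{along the geodesic parameter} $\theta_2$ (your analogue of Propositions \ref{prop3} and \ref{prop3_IN}) only compares geodesic points with each other; it says nothing about an arbitrary off-geodesic $\bar{\bf w}_2$, which is what the contradiction needs. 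The paper closes this gap differently: for an \emph{arbitrary} unit vector it expands both quadratic forms in terms of the two principal angles $\phi_{I2}=\angle({\bf w}_I,{\bf w}_2)$ and $\phi_{L2}=\angle({\bf w}_L,{\bf w}_2)$ as in (\ref{thm2_4}) (the analogue of (\ref{Prop1_9})), argues each form grows as its own principal angle shrinks, and then uses the spherical-geometry fact that a vector off the geodesic can be replaced by a geodesic vector with \emph{both} principal angles strictly smaller (the minimum of $\phi_{I2}+\phi_{L2}$ being $\phi_I$, attained exactly on the geodesic (\ref{bothG1})). No Lagrange multipliers enter the argument, and no ``KKT weighting'' is needed to get simultaneous improvement.

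The KKT/eigenvector route you lead with is also shaky as a primary strategy. You would have to certify that the top eigenvector of ${\bf A}(\bar\lambda,\bar\mu)={\bf H}_{22}^H{\bf H}_{22}+(\bar\lambda-\bar\mu){\bf H}_{12}^H{\bf H}_{12}$ lies in ${\rm span}\{{\bf w}_I,{\bf w}_L\}$, but for a generic pair of channels and a generic scalar weight the principal eigenvector of such a weighted sum does \emph{not} live in the span of the two individual top right-singular vectors; continuity in $\bar\lambda-\bar\mu$ only gives you a one-parameter curve connecting ${\bf w}_I$ to ${\bf w}_L$, not that this curve is the geodesic. So the endpoint/continuity discussion (your use of Remark \ref{remark4} and (\ref{PartialCSIT10})) identifies the anchors but cannot, by itself, pin the optimizer to (\ref{bothG1}); you would still have to fall back on the direct principal-angle dominance argument above, which is the substance of the paper's proof and remains unexecuted in your proposal.
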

\begin{proof}
See Appendix \ref{appndix4}.
\end{proof}

Accordingly, to evaluate the achievable region, we optimize $P_1$, $\theta_1$ and ${\theta}_2$ in a distributed manner. Therefore, $\theta_1$ can be determined such as (\ref{prop2_1}) and $\omega_1 = {\bf v}_G^H(\theta_1){\bf H}_{11}^H{\bf H}_{11}{\bf v}_G(\theta_1)$. Then, by substituting $P_1 =\max\{\frac{1}{\omega_1}(\bar E - P{\bf w}_G^H(\theta_2){\bf H}_{12}^H{\bf
H}_{12} {\bf w}_G(\theta_2)), 0\}$ from (\ref{PartialCSIT4_1_revise}) and (\ref{PartialCSIT4_2}) into (\ref{PartialCSIT4_1}), we can find the optimal $\theta_2$ such that
\begin{eqnarray}\label{bothG2}
\!\!\underset{0\!\leq \!\theta_2\!\leq\! \phi_I}{\max} \!J(\!\theta_2\!) \!=\! \frac{P{\bf w}_G^H(\theta_2){\bf H}_{22}^H{\bf H}_{22}{\bf w}_G(\theta_2)}{1\! +\!  \frac{\alpha}{\omega_1}(\max\{\bar E \!-\! P{\bf w}_G^H(\!\theta_2\!){\bf H}_{12}^H{\bf
H}_{12} {\bf w}_G(\!\theta_2\!),0\})}\!,\!\!
\end{eqnarray}
The iterative algorithm for the Geodesic energy/information beamforming can be summarized in Algorithm 4.

%
%

\vspace*{2pt}Algo. 4. {\it{\underline{Iterative algorithm for the
achievable R-E region}}}\\{\it{\underline{for Geodesic energy/information beamforming:}}}
\begin{enumerate}
\item If $P{\bf w}_G(0)^H{\bf H}_{12}^H{\bf H}_{12}{\bf w}_2(0) \geq \bar E$, set $P_1=0$ and terminate the algorithm. Else, initialize $n=0$ and determine $\phi_0^{(0)}$ that minimizes $|P \|{\bf H}_{11}{\bf v}_G(\phi_0^{(0)})\|^2 - \bar E|$.
\item For $n=0:N_{max}$
\!\!\!\begin{enumerate}
\item Find $\theta_1^{(n)}\in [0, \phi_0^{(n)}]$ as (\ref{prop2_1}) and solve (\ref{bothG2}) for $\theta_2^{(n)}$.
\item Then, update $\phi_0^{(n)}$ such that $P \|{\bf H}_{11}{\bf v}_G(\phi_0^{(n)})\|^2 = \bar E - P\|{\bf H}_{12}{\bf w}_G(\theta_2^{(n)})\|^2$.
\end{enumerate}
\item Finally, determine the energy transmit power as
$P_1 =\frac{1}{\|{\bf H}_{11} {\bf v}_G(\theta_1^{(n)}) \|^2}(\max\{\bar E - P\|{\bf
H}_{12} {\bf w}_G(\theta_2^{(n)})\|^2, 0\})$ and the boundary point of the achievable R-E region is given as
\begin{eqnarray}\label{ReviseAlgoGeo2_5}
(R, E) =(\log \det
\left({\bf I}_{M} + \tilde{\bf H}_{22}{\bf Q}_2^{(N_{max})}\tilde{\bf
H}_{22}^H \right),\nonumber\\~ E_{11}^{(N_{max})} + tr \left({\bf H}_{12} {\bf
Q}_2^{(N_{max})} {\bf H}_{12}^H\right)).
\end{eqnarray}
\end{enumerate}
 \vspace*{2pt}

\begin{remark}\label{remark5}
In algorithm 4, the information transmitter does not need the full CSI of their channels to both energy/information receivers, but requires ${\bf w}_I$ and ${\bf w}_L$. Similarly to Remark \ref{remark3}, because $J(\theta_2)$ in (\ref{bothG2}) can be rewritten as shown at the top of the next page,
\begin{figure*}[!t]
\scriptsize
\begin{eqnarray}\label{eqn_remark5_1}
J(\theta_2) = \frac{P(  \cos^2(\theta_2)\sigma_{22,1}^2 +\sin^2(\theta_2) \| {\bf H}_{22}({\bf w}_I)^{\perp}{\bf u}_L\|^2 )}{1 +  \frac{\alpha}{\omega_1}(\max\{\bar E - P( \cos^2(\phi_I - \theta_2)\sigma_{12,1}^2 +\sin^2(\phi_I - \theta_2) \| {\bf H}_{12}({\bf w}_L)^{\perp}{\bf u}_I\|^2 ) ,0\})},\nonumber
\end{eqnarray}
\hrulefill \vspace*{2pt}
\end{figure*}
to evaluate $J(\theta_2)$, four scalar values of $\sigma_{22,1}^2 $, $\sigma_{12,1}^2$, $ \| {\bf H}_{22}({\bf w}_I)^{\perp}{\bf u}_L\|^2$, and $ \| {\bf H}_{12}({\bf w}_L)^{\perp}{\bf u}_I\|^2$ are additionally required, where
$({\bf w}_L)^{\perp}{\bf u}_I =[{\bf w}_L u_L \cos(\phi_I) - {\bf w}_I]^{-1}\sin(\phi_I)^{-1} $ and $u_L = u_I^H$. The last two of them can be evaluated at each MS from two different reference signals (with $({\bf w}_I)^{\perp}{\bf u}_L$ and $({\bf w}_L)^{\perp}{\bf u}_I$, respectively) of the information transmitter and reported back to the information transmitter. Note that the information transmitter does not require the information of the interference covariance matrix ${\bf R}_{-2}^{(n)}$ in (\ref{ReviseAlgoGeo2_1}) of Algorithm 2 but requires two scalar values $\omega_1$ and $\alpha$, because it only needs to find ${\theta}_{2}^{(n)}$ for information beamforming in (\ref{bothG2}). Note that Algorithms 3 and 4 exhibit the same optimal R-E region under the local CSIT with the non-cooperative energy/information beamformers (See also Fig. 5), but Algorithm 4 benefits from a further reduced feedback overhead compared to Algorithm 3.
\end{remark}

\subsection{Discussion: Extension to K-user MIMO IFC}\label{ssec:KuserIFC}
In \cite{ParkClerckx2}, we have shown that JWIET problem in the K-user MIMO IFC can be transformed into an equivalent two-user MIMO IFC with additional constraints (the covariance matrix of external interferences at the effective ID MS and the block diagonal constraints on the covariance matrix of the effective information transmitter, see also \cite{ParkClerckx2}) and the optimal energy beamforming strategy has a rank-one beamforming. Therefore, the Geodesic beamforming can be extended to general K-user MIMO IFC. That is, if each energy transmitter can find two optimal directions such that either the system energy is maximized (energy maximum direction, EMD) or the interference is minimized (interference minimum direction, IMD), it can steer the beam lying on the Geodesic curve between the EMD and IMD vectors. However, finding the EMD and IMD vectors at each transmitter requires full local CSIT of its associated channel links. To extend our proposed Geodesic beamforming scheme (with a partial feedback of the unit-norm singular vectors, i.e., either $[{\bf V}_{ij}]_1$ or $[{\bf V}_{ij}]_M$), EAP and IAP have to estimate the EMD and IMD based on those partial feedback information. Let us assume that we have $K_1$ energy transceiver pairs and $K-K_1$ information transceiver pairs and ${\bf H}_{ij}$ is denoted as the channel from the $i$th transmitter to the $j$th receiver. Without loss of generality, the $i$th MS, $i=1,...,K_1$ harvests the energy. Then, to maximize the transferred energy to EH MSs, each transmitter should find or estimate the singular vector associated with the largest singular value (simply, largest singular vector) of $[{\bf H}_{1j}^T,...,{\bf H}_{K_1 j}^T]^T$, $j=1,...,K$. If the $i$th EH MS reports the largest singular value and the associated singular vector of each channel matrix from the transmitters, respectively, i.e., ${\sigma}_{ij,1}$ and $[{\bf V}_{ij}]_1$ for $j=1,...,K$, one simple approach to estimate the largest singular vector (or, EMD) based on the partial CSIT at the $j$th AP is the selection method such as
\begin{eqnarray}\label{Discusseqn1}
{\bf v}_{j,EMD} = [{\bf V}_{\bar ij}]_1 {\text{ such that }} \bar i = \underset{i=1,...,K_1}{\arg}{\max} ~{\sigma}_{i j,1}.
\end{eqnarray}
Or, we can compute ${\bf v}_{j,EMD}$ as
\begin{eqnarray}\label{Discusseqn2}
{\bf v}_{j,EMD} = [\bar{\bf U}_j]_1,
\end{eqnarray}
where $[\bar{\bf U}_j]_1$ is the largest left singular vector of $ \left[[{\bf V}_{1j}]_1,...,[{\bf V}_{K_1j}]_1 \right]diag\{{\sigma}_{1j,1},...,{\sigma}_{K_1j,1} \}$. That is, ${\bf v}_{j,EMD}$ is the largest singular vector of the range space of $[{\bf V}_{ij}]_1$, $i=1,...,K_1$. 

In the ID MSs, the signals via all the cross links are the interference signals. Therefore, each ID MS reports the largest singular vector, $[{\bf V}_{ij}]_1$ ($i=j$) to the serving IAP and the minimum singular vector, $[{\bf V}_{ij}]_M$, to the other IAPs and EAPs ($i\neq j$). The feedback strategy is described in Fig. \ref{JWIET_KuserIC_feedback}. Then, the transmitters can then estimate IMD vectors based on the partial CSIT, similarly to (\ref{Discusseqn1}) and (\ref{Discusseqn2}). Once EMD and IMD vectors are estimated based on the partial CSIT, we can optimize $\theta_i$ and the transmit power of EAP in a distributed way to satisfy the target harvesting energy. That is, we set $\theta_i = 0$ and the transmit power of EAP as maximum. If the harvested energy is larger than the target energy, then each EAP tilts beams by increasing $\theta_i$ and simultaneously reduces its power $P_i$ to decrease the interference to ID MSs, until the harvested energy meets the target energy \cite{ParkClerckx2}.

%

\begin{figure}
\begin{center}
\begin{tabular}{c}
\includegraphics[height=4.2cm]{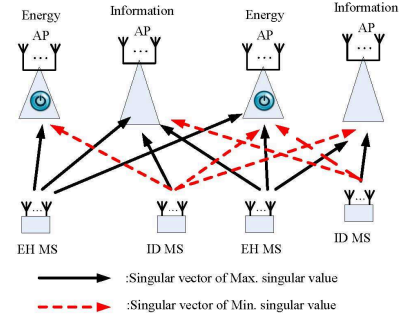}
\end{tabular}
\end{center}
\caption[JWIET_KuserIC_feedback]
{ \label{JWIET_KuserIC_feedback} Feedback strategy for general K-user MIMO IFC.}
\end{figure}

\section{Adaptive feedback bit allocation strategy for Geodesic based energy/information beamforming}\label{sec:Bitallocation}
Because ${\bf v}_E$, ${\bf v}_L$, ${\bf w}_I$, and ${\bf w}_L$ are i.i.d. isotropically distributed on $M$ dimensional unit-norm sphere, to report them to their respective transmitters, EH MS (resp. ID MS) can utilize the RVQ with the codebooks $\mathfrak{C}_{i1}\triangleq \{{\bf f}_l^{i1},l=1,..., 2^{B_{i1}}\}$ (resp. $\mathfrak{C}_{i2}\triangleq \{{\bf f}_l^{i2},l=1,..., 2^{B_{i2}}\}$), $i=1,2$ as
\begin{eqnarray}\label{RVQ_1}
&\hat{\bf v}_E= \arg \underset{{\bf f}_l^{11},l=1,..., 2^{B_{11}}}{\max} |{\bf v}_E^H {\bf f}_{l}^{11}|,\nonumber&\\& \hat{\bf w}_L = \arg \underset{{\bf f}_l^{21},l=1,..., 2^{B_{21}}}{\max} |{\bf w}_L^H {\bf f}_l^{21}|, {\text{ at EH MS}},\nonumber&\\&
\hat{\bf v}_L= \arg \underset{{\bf f}_l^{12},l=1,..., 2^{B_{12}}}{\max} |{\bf v}_L^H {\bf f}_{l}^{12}|,\nonumber&\\& \hat{\bf w}_I = \arg \underset{{\bf f}_l^{22},l=1,..., 2^{B_{22}}}{\max} |{\bf w}_I^H {\bf f}_l^{22}|, {\text{ at ID MS}},&
\end{eqnarray}
where $B_{ij}$ is the number of feedback bits that is reported by the $j$th MS to the $i$th AP and $B_{11} + B_{21} = B_{12}+ B_{22} = B$. Assuming that the quantized information is perfectly reported to both energy/information transmitters with zero-delay, from (\ref{Partial5}) and (\ref{bothG1}), the estimated Geodesic energy/information beamforming can be given as
\begin{eqnarray}\label{RVQ_2}\nonumber
\!\!\!\hat{\bf v}_G(\theta_1) \!&\!\!=\!\!& \!\hat{\bf v}_E \hat{u}_E\! \cos(\theta_1)\! \\\nonumber\!&\!\!\!\!&-\! [\hat{\bf v}_E \hat{u}_E\! \cos(\hat\phi_E) \!-\! \hat{\bf v}_L](\sin\hat\phi_E)^{-\!1}\! \sin(\theta_1)\!,\!\!\\\!\!\!
\hat{\bf w}_G(\theta_2) \!&\!\!=\!\!&\! \hat{\bf w}_I \hat{u}_I \!\cos(\theta_2) \!\nonumber\\&\!\!\!\!&-\! [\hat{\bf w}_I \hat{u}_I \!\cos(\hat\phi_I)\! -\! \hat{\bf w}_L](\sin\hat\phi_I)^{-\!1}\! \sin(\theta_2)\!,\!\!\nonumber
\end{eqnarray}
where $ \hat\phi_E =\cos^{-1}|\hat{\bf v}_E^H \hat{\bf v}_L| $ and $ \hat\phi_I =\cos^{-1}|\hat{\bf w}_I^H \hat{\bf w}_L| $. Then, the following proposition is useful to develop the adaptive feedback bit allocation strategy for the EH MS.
\begin{prop}\label{prop3_rev} The energy transferred from the energy/information transmitters is lower bounded as shown at the top of the next page.
\begin{figure*}[!t]
\scriptsize
\begin{eqnarray}\label{prop3_1}
E\left[P_1\|{\bf H}_{11}\hat{\bf v}_G(\theta_1)\|^2\right] \!\!&\!\!\geq\!\! &\ E\left[\|{\bf H}_{11}\hat{\bf v}_E\|^2\right] P_1\left(\cos^2\theta_1 - \sin^2\theta_1\left(1- B(1, \frac{M}{M-1})\right) \right) +MP_1 \sin^2\theta_1 \triangleq E_{11}^{low},
\end{eqnarray}
\begin{eqnarray}\label{prop3_1_1}
E\left[P\|{\bf H}_{12}\hat{\bf w}_G(\theta_2)\|^2\right] \!\!&\!\!\geq\!\! &\!E\left[\|{\bf H}_{12}\hat{\bf w}_L\|^2\right]P \left(\cos^2(\hat\phi_I -\theta_2) - \sin^2(\hat\phi_I -\theta_2)\left(1- B(1, \frac{M}{M-1})\right) \right)+\alpha_{12}PM \sin^2(\hat\phi_I -\theta_2) \triangleq E_{12}^{low},
\end{eqnarray}
where $B(x, y)$ is the Beta function.

\hrulefill \vspace*{2pt}
\end{figure*}
Here, $ 0\leq \theta_1 \leq \hat\phi_E$ and $ 0\leq \theta_2 \leq \hat\phi_I$, respectively. Note that $P_1$, $\theta_1$, and $\theta_2$ are the parameters dependent on the target harvesting energy $\bar E$.
%
\end{prop}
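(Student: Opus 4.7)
The strategy is to expand the squared norm $\|{\bf H}_{11}\hat{\bf v}_G(\theta_1)\|^2$ using the quantized Geodesic decomposition, and then to take expectation by exploiting the fact that the direction $(\hat{\bf v}_E)^{\perp}\hat{\bf u}_L$ inherits its randomness from $\hat{\bf v}_L$, which is isotropic on the unit sphere and independent of ${\bf H}_{11}$.

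First I would substitute $\hat{\bf v}_G(\theta_1) = \hat{\bf v}_E\hat u_E\cos\theta_1 - (\hat{\bf v}_E)^{\perp}\hat{\bf u}_L\sin\theta_1$ into the squared norm to get three summands: a $\cos^2\theta_1$ contribution equal to $\|{\bf H}_{11}\hat{\bf v}_E\|^2$, a $\sin^2\theta_1$ contribution equal to $\|{\bf H}_{11}(\hat{\bf v}_E)^{\perp}\hat{\bf u}_L\|^2$, and a cross term proportional to $\mathrm{Re}\!\left(\hat u_E^{*}\hat{\bf v}_E^H{\bf H}_{11}^H{\bf H}_{11}(\hat{\bf v}_E)^{\perp}\hat{\bf u}_L\right)$. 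Next I would use two structural facts: because ${\bf H}_{21}$ has i.i.d.\ unit-variance complex Gaussian entries, its distribution is unitarily invariant, so ${\bf v}_L = [{\bf V}_{21}]_M$ is isotropic on the unit sphere in $\mathbb{C}^M$, and this isotropy is inherited by $\hat{\bf v}_L$ after averaging over the random codebook $\mathfrak{C}_{12}$; and since $({\bf H}_{21}, \mathfrak{C}_{12})$ is independent of $({\bf H}_{11}, \mathfrak{C}_{11})$, the vector $\hat{\bf v}_L$ is independent of $(\hat{\bf v}_E, {\bf H}_{11})$. Conditioning on $(\hat{\bf v}_E, {\bf H}_{11})$ and writing $\hat{\bf v}_L = \alpha\,\hat{\bf v}_E + \sqrt{1-|\alpha|^2}\,{\bf t}$ with $\alpha = \hat{\bf v}_E^H\hat{\bf v}_L$, isotropy makes the phase of $\alpha$ uniform on the unit circle and, independently, ${\bf t}$ uniform on the $(M-1)$-dimensional sphere in $\{\hat{\bf v}_E\}^{\perp}$; a direct algebraic check then gives $(\hat{\bf v}_E)^{\perp}\hat{\bf u}_L = -{\bf t}$ and $\hat u_E = \alpha/|\alpha|$.

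With this setup the cross term reduces to $-2\cos\theta_1\sin\theta_1\,\mathrm{Re}\!\left((\alpha^{*}/|\alpha|)\,\hat{\bf v}_E^H{\bf H}_{11}^H{\bf H}_{11}{\bf t}\right)$, whose conditional mean vanishes because the uniform phase of $\alpha$ averages the factor $\alpha^{*}/|\alpha|$ to zero. For the orthogonal contribution, the uniformity of ${\bf t}$ in $\{\hat{\bf v}_E\}^{\perp}$ gives $E_{\bf t}\!\left[\|{\bf H}_{11}{\bf t}\|^2 \mid \hat{\bf v}_E,{\bf H}_{11}\right] = (\|{\bf H}_{11}\|_F^2 - \|{\bf H}_{11}\hat{\bf v}_E\|^2)/(M-1)$, and averaging over ${\bf H}_{11}$ together with $E[\|{\bf H}_{11}\|_F^2] = M^2$ produces $E[\|{\bf H}_{11}(\hat{\bf v}_E)^{\perp}\hat{\bf u}_L\|^2] = (M^2 - E[\|{\bf H}_{11}\hat{\bf v}_E\|^2])/(M-1)$. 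Relaxing $1/(M-1) \geq 1/M$ and invoking the Beta identity $B(1, M/(M-1)) = (M-1)/M$, so that $1 - B(1, M/(M-1)) = 1/M$, multiplying through by $P_1$ yields (\ref{prop3_1}). The information-side bound (\ref{prop3_1_1}) then follows from the same argument after reparameterizing the Geodesic curve from the $\hat{\bf w}_L$ end, so that the swept angle becomes $\hat\phi_I - \theta_2$: the isotropic, channel-independent vector is now $\hat{\bf w}_I$, the RVQ of $[{\bf V}_{22}]_1$, which is isotropic and independent of ${\bf H}_{12}$ because ${\bf H}_{22}$ is unitarily invariant and independent of ${\bf H}_{12}$; the only substantive change in the final expression is that $E[\|{\bf H}_{12}\|_F^2] = \alpha_{12}M^2$, since the cross-link entries of ${\bf H}_{12}$ have variance $\alpha_{12}$, producing the $\alpha_{12}PM\sin^2(\hat\phi_I - \theta_2)$ additive term.

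The most delicate step is the vanishing of the cross term, which rests on correctly identifying that the random direction $(\hat{\bf v}_E)^{\perp}\hat{\bf u}_L$ (respectively the analogous direction in the information case) inherits its isotropy from a vector that is independent of the channel matrix appearing in the quadratic form; once that independence is properly set up, the remaining computations reduce to standard Gaussian-channel averages of quadratic forms against isotropic directions on the complex sphere.
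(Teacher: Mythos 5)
Your proposal is correct and reaches exactly the bounds (\ref{prop3_1})--(\ref{prop3_1_1}), and its skeleton is the same as the paper's: expand the quantized Geodesic beamformer, exploit that the cross-link quantized vector ($\hat{\bf v}_L$, resp. $\hat{\bf w}_I$) is isotropic and independent of the channel in the quadratic form, kill the cross term through the uniform phase $\hat u_E$ (the paper's ``$E[\hat u_E]=0$'' step), and finish with the identity $1-B(1,\tfrac{M}{M-1})=\tfrac{1}{M}$. The differences are in the middle: the paper jumps to (\ref{prop3_2}) by citing the orthogonality used in (\ref{Prop1_9}), which strictly holds only for the unquantized ${\bf v}_E$, and then bounds the term $(a)$ in (\ref{prop3_6}) via $E[\|{\bf H}_{11}\hat{\bf v}_L\|^2]=M$, $E[\cos^2\hat\phi_E]=1-B(1,\tfrac{M}{M-1})$ and $\sin^2\hat\phi_E\le 1$; you instead carry the cross term explicitly and remove it by phase averaging (which quietly repairs that orthogonality step), and you evaluate the orthogonal-direction energy exactly through conditional isotropy, $E[{\bf t}{\bf t}^H\mid\hat{\bf v}_E]=({\bf I}_M-\hat{\bf v}_E\hat{\bf v}_E^H)/(M-1)$ and $E[\|{\bf H}_{11}\|_F^2]=M^2$, giving $(M^2-E[\|{\bf H}_{11}\hat{\bf v}_E\|^2])/(M-1)$ before relaxing $\tfrac{1}{M-1}\ge\tfrac{1}{M}$ (valid since $\|{\bf H}_{11}\hat{\bf v}_E\|^2\le\|{\bf H}_{11}\|_F^2$). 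Your route is thus marginally tighter before the final relaxation and makes the independence/isotropy bookkeeping more explicit, while the paper's is shorter; the mirror argument for (\ref{prop3_1_1}) with the swept angle $\hat\phi_I-\theta_2$ and $E[\|{\bf H}_{12}\|_F^2]=\alpha_{12}M^2$ matches what the paper leaves as ``in a similar way.''
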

\begin{proof}
From (\ref{Prop1_9}) in Appendix A,
\begin{eqnarray}\label{prop3_2}
\!\!\!E\left[P_1\!\|{\bf H}_{11}\hat{\bf v}_G(\theta_1)\|^2\right] \!=\! P_1\!\cos^2(\theta_1)E\left[\|{\bf H}_{11}\hat{\bf v}_E\|^2\right]\!+\quad \!\! \nonumber\\\!\!\!P_1 \!\sin^2(\theta_1)\!\underbrace{ E\left[\| {\bf H}_{11}[\hat{\bf v}_E \hat{u}_E \cos(\hat\phi_E) \!-\! \hat{\bf v}_L](\sin\hat\phi_E)^{-1}\|^2\right]}_{(a)}\!,\!\!
\end{eqnarray}
where $ \hat u_E\cos\hat\phi_E =\hat{\bf v}_E ^H\hat{\bf v}_L$. Here, $\hat u_E \triangleq e^{j\hat\psi}$ is the phase difference between $\hat{\bf v}_E$ and $\hat{\bf v}_L$ and $\hat\psi$ is uniformly distributed on $[0, 2\pi]$ which implies that $E[\hat u_E]=0$. From \cite{N_Jindal2,CAuYeung}, we have
\begin{eqnarray}\label{prop3_3}
E\left[\|[{\bf V}_{11}]_1^H\hat{\bf v}_E \|^2\right] =1-2^{B_{11} }B(2^{B_{11} }, \frac{M}{M-1}).
\end{eqnarray}
In addition, from \cite{N_Jindal2}, $\hat{\bf v}_E$ can be modeled as $\hat{\bf v}_E = \sqrt{1-z^2}{\bf v}_E + z {\bf s}$, where ${\bf s}$ is a unit-norm vector isotropically
distributed in the null space of ${\bf v}_E$ and $z$ is quantization error with $E[z]= 2^{B_{11} }B(2^{B_{11} }, \frac{M}{M-1})$. Note that ${\bf H}_{11}$ is zero-mean normalized Gaussian distributed and independent with $\hat{\bf v}_{L}$, $E\left[ \|{\bf H}_{11}\hat{\bf v}_L\|^2\right] = M$. Furthermore, $\hat{\bf v}_L$ can be rewritten as
\begin{eqnarray}\label{prop3_4_rev1}
\hat{\bf v}_L = {\hat u}_E\cos\hat\phi_E \hat{\bf v}_E + {\bf P}_{ \hat{\bf v}_E}^{\perp}\hat{\bf v}_L,
\end{eqnarray}
where ${\bf P}_{ \hat{\bf v}_E}^{\perp} = {\bf I}_M - \hat{\bf v}_E \hat{\bf v}_E^H$ and ${\hat u}_E$ is independent with the second term in (\ref{prop3_4_rev1}) because ${\hat u}_E$ depends only on the inner product of $\hat{\bf v}_E$ and $\hat{\bf v}_L$. By substituting (\ref{prop3_4_rev1}) into the second expectation of (\ref{prop3_2}), it is then lower bounded as shown at the top of the next page.
\begin{figure*}[!t]
\scriptsize
\begin{eqnarray}\label{prop3_6}
\!\!(a)\!&\!\!=\! \!&\!\!E\left[\frac{1}{\sin^2\hat\phi_E}\left(\cos^2\hat\phi_E \|{\bf H}_{11}\hat{\bf v}_E\|^2 + \|{\bf H}_{11}\hat{\bf v}_L\|^2 \!-\!{\hat u}_E\cos\hat\phi_E \hat{\bf v}_L^H{\bf H}_{11}^H{\bf H}_{11}\hat{\bf v}_E\! -\!\left({\hat u}_E\cos\hat\phi_E \hat{\bf v}_L^H{\bf H}_{11}^H{\bf H}_{11}\hat{\bf v}_E\!\right)^H\!\right)\right]\!,\nonumber\!\!\\
\!\!&\!\!=\! \!&\!\! E\biggl[\! \frac{1}{\sin^2\hat\phi_E} \biggl(M\!-\! \cos^2\hat\phi_E\|{\bf H}_{11}\hat{\bf v}_E\|^2\!-\!{\hat u}_E\cos\hat\phi_E \hat{\bf v}_L^H{\bf P}_{ \hat{\bf v}_E}^{\perp}{\bf H}_{11}^H{\bf H}_{11}\hat{\bf v}_E\! -  \!\left(\!{\hat u}_E\cos\hat\phi_E \hat{\bf v}_L^H{\bf P}_{ \hat{\bf v}_E}^{\perp}{\bf H}_{11}^H{\bf H}_{11}\hat{\bf v}_E\!\right)^H\!\biggr)\biggr],\!\!\nonumber\\
\!\!&\!\!=\!\!&\!\! E\left[\frac{1}{1-\cos^2\hat\phi_E}( M- \cos^2\hat\phi_E  \|{\bf H}_{11}\hat{\bf v}_E\|^2 ) \right]\geq M - (1 - B(1, \frac{M}{M-1}))E\left[ \|{\bf H}_{11}\hat{\bf v}_E\|^2\right],
\end{eqnarray}
where the equality in (\ref{prop3_6}) is from that $\hat u_E$ is independent with $\hat\phi_E$ and $E[\hat u_E]=0$. In addition, the last inequality is from $E[\cos^2\hat\phi_E] = 1 - B(1, \frac{M}{M-1})$ and $\cos^2\hat\phi_E \leq 1$.

\hrulefill \vspace*{2pt}
\end{figure*}
Therefore, we can have (\ref{prop3_1}) and in a similar way, we can also derive (\ref{prop3_1_1}).
\end{proof}
Note that $\|{\bf H}_{11}\hat{\bf v}_E\|^2 = \sum_{i=1}^M \sigma_{11,i}^2 \|[{\bf V}_{11}]_i^H\hat{\bf v}_E \|^2$, where $\sigma_{11,i}^2$ and $ \|[{\bf V}_{11}]_i^H\hat{\bf v}_E \|^2$ are independent. Then, the first expectation of (\ref{prop3_2}) is given as
\begin{eqnarray}\label{RVQ_5}
E\left[\|{\bf H}_{11}\hat{\bf v}_E\|^2\right] = E[\sigma_{11,1}^2]\left(1-2^{B_{11} }B(2^{B_{11} }, \frac{M}{M-1})\right) \nonumber\\+  E[\sum_{i=2}^M \sigma_{11,i}^2 \|[{\bf V}_{11}]_i^H\hat{\bf v}_E \|^2].
\end{eqnarray}
That is, the lower bound in (\ref{prop3_1}) has a quite complicated form, but, thanks to Lemma 1 in \cite{WSantipach} and the asymptotic results for large $M$ such as \cite{WSantipach,WSantipach2, ATulino}\footnote{Note that in \cite{WSantipach} and \cite{ATulino}, the entries are i.i.d. Gaussian RVs with a zero-mean and a variance of $\frac{1}{M}$.}
\begin{eqnarray}\label{RVQ_6}
\!\!&\!2^{B_{11} }B(2^{B_{11} }, \frac{M}{M-1}) \longrightarrow 2^{- \frac{B_{11}}{M}},~
E[\sigma_{11,1}^2] \longrightarrow 4M,\nonumber\!&\!\!\\\!\!&\!\!
E[\sum_{i\!=\!2}^M \sigma_{11,i}^2 \|[{\bf V}_{11}]_i^H\hat{\bf v}_E \|^2] \longrightarrow 2^{-\! \frac{B_{11}}{M}}\int_{0}^{\infty}\lambda g_{{\bf H}_{11}^H{\bf H}_{11}}\!(\lambda)d\lambda,\nonumber\!\!&\!\!\\\!\!&\!\! \int_{0}^{\infty}\lambda g_{{\bf H}_{11}^H{\bf H}_{11}}(\lambda)d\lambda \longrightarrow M,\!\!&\!\!
\end{eqnarray}
where $g_{H_{11}^H{\bf H}_{11}}(\lambda)$ is a deterministic function given by \cite{ATulino}, it can be asymptotically approximated for large $M$ as
\begin{eqnarray}\label{RVQ_7}
E_{11}^{low} \approx M P_1\left[ (4- 3 \cdot2^{- \frac{B_{11}}{M}})  \cos^2\theta_1 +\sin^2\theta_1\right].
\end{eqnarray}
Note that, from (\ref{RVQ_7}), as the number of antennas or feedback bits increase or $\theta_1$ decreases, the energy transferred from the energy transmitter will increase. Interestingly, when the number of feedback bits is zero, the transferred energy becomes $M$, independent with $\theta_1$. Similarly, the lower bound in (\ref{prop3_1_1}) can be approximated as
\begin{eqnarray}\label{RVQ_8}
\!\!E_{12}^{low} \!\approx \!\alpha_{12}MP\!\left[ (4\!- \!3 \cdot2^{- \!\frac{B_{21}}{M}})  \cos^2(\hat\phi_I \!-\! \theta_2) + \sin^2(\hat\phi_I \!-\!\theta_2)\right]\!,\nonumber\!
\end{eqnarray}
\begin{remark}\label{remark6}
To find the optimal $B_{11}$ and $B_{21}$, by substituting $B_{21} = B - B_{11}$ into $E_{11}^{low} +E_{12}^{low}$, we can search $B_{11}$ maximizing it, numerically. Fortunately, because $E_{11}^{low} +E_{12}^{low}$ is logarithmically concave, by computing $\nabla_{B_{11}}(E_{11}^{low} +E_{12}^{low}) = 0$, we get an optimal solution as
\begin{eqnarray}\label{RVQ_9}
\!\!B_{11}\! =\! min\!\left\{\! B, \left(\left\lfloor \frac{B}{2} \!+ \!\frac{M}{2}\log\frac{P_1\cos^2\theta_1}{\alpha_{12}P\cos^2(\hat\phi_I \!-\!\theta_2)} \right\rceil \right)^+\!\right\}\!.\!\!
\end{eqnarray}
That is, when the path loss of the cross link becomes large (or, $\alpha_{12}$ becomes small), $B_{11}$ should be increased. In addition, when $\theta_1$ is small (i.e., the energy transferred from the energy transmitter is large), $B_{11}$ should be increased. In contrast, when $P_1= 0$ (i.e., the harvested energy from the information transmitter is enough), $B_{11}=0$ which implies that all the feedback bits are allocated for the cross link.
\end{remark}

The following proposition is useful to develop the adaptive feedback bit allocation strategy for the ID MS.
\begin{prop}\label{prop4} The channel gain of information link is lower bounded as (\ref{prop4_1_1}),
while the interference from the energy transmitter is upper bounded as (\ref{prop4_1}) shown at the top of the next page.
\begin{figure*}[!t]
\scriptsize
\begin{eqnarray}\label{prop4_1_1}
\!E\left[P\|{\bf H}_{22}\hat{\bf w}_G(\theta_2)\|^2\right] \!&\!\geq\!&\! E\left[\|{\bf H}_{22}\hat{\bf w}_I\|^2\right]P \left(\cos^2(\theta_2) - \sin^2(\theta_2)\left(1- B(1, \frac{M}{M-1})\right) \right)+M P\sin^2(\theta_2) \triangleq S_{22}^{low},
\end{eqnarray}
\begin{eqnarray}\label{prop4_1}
E\left[P_1\|{\bf H}_{21}\hat{\bf v}_G(\theta_1)\|^2\right] \leq E\left[\|{\bf H}_{21}\hat{\bf v}_L\|^2\right]P_1\cos^2(\hat\phi_E-\theta_1) +\frac{\alpha_{21}MP_1}{B(1, \frac{M}{M-1})} \sin^2(\hat\phi_E-\theta_1) \triangleq IN_{21}^{up}.
\end{eqnarray}
\hrulefill \vspace*{2pt}
\end{figure*}
\end{prop}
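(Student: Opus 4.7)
The plan is to follow the template used in the proof of Proposition \ref{prop3_rev} verbatim, since the two claims are structurally analogous to (\ref{prop3_1}) and (\ref{prop3_1_1}) but with $({\bf H}_{11},{\bf H}_{12})$ replaced by $({\bf H}_{22},{\bf H}_{21})$. For the lower bound (\ref{prop4_1_1}), writing $\hat{\bf w}_G(\theta_2)=\hat{\bf w}_I\hat u_I\cos\theta_2-(\hat{\bf w}_I)^{\perp}\hat{\bf u}_L\sin\theta_2$ and expanding $\|{\bf H}_{22}\hat{\bf w}_G(\theta_2)\|^2$, the cross term has zero mean because the uniform phase $\hat u_I$ is independent of the remaining magnitudes under isotropic RVQ. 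Substituting $(\hat{\bf w}_I)^{\perp}\hat{\bf u}_L=[\hat{\bf w}_I\hat u_I\cos\hat\phi_I-\hat{\bf w}_L]/\sin\hat\phi_I$ and the orthogonal decomposition $\hat{\bf w}_L=\hat u_I^*\cos\hat\phi_I\hat{\bf w}_I+{\bf P}_{\hat{\bf w}_I}^{\perp}\hat{\bf w}_L$ mirrors (\ref{prop3_6}) and yields $E[\|{\bf H}_{22}(\hat{\bf w}_I)^{\perp}\hat{\bf u}_L\|^2]=E[(M-\cos^2\hat\phi_I\|{\bf H}_{22}\hat{\bf w}_I\|^2)/\sin^2\hat\phi_I]$, where $E[\|{\bf H}_{22}\hat{\bf w}_L\|^2]=M$ because ${\bf H}_{22}$ and $\hat{\bf w}_L$ (which is a function of ${\bf H}_{12}$) are independent. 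Applying $1/\sin^2\hat\phi_I\ge 1$ together with $E[\cos^2\hat\phi_I]=1-B(1,\frac{M}{M-1})$ then produces (\ref{prop4_1_1}).

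For the upper bound (\ref{prop4_1}), the key is to re-anchor the Geodesic (\ref{Partial5}) at $\hat{\bf v}_L$ rather than $\hat{\bf v}_E$. Substituting $\hat{\bf v}_E=\hat u_E^*\cos\hat\phi_E\hat{\bf v}_L+\sin\hat\phi_E\hat{\bf w}'$, where $\hat{\bf w}'\triangleq{\bf P}_{\hat{\bf v}_L}^{\perp}\hat{\bf v}_E/\sin\hat\phi_E$ is a unit vector orthogonal to $\hat{\bf v}_L$, and using the trigonometric identity $\cos\hat\phi_E\sin(\hat\phi_E-\theta_1)+\sin\theta_1=\sin\hat\phi_E\cos(\hat\phi_E-\theta_1)$, a short algebraic simplification recasts the Geodesic exactly as $\hat{\bf v}_G(\theta_1)=\hat{\bf v}_L\cos(\hat\phi_E-\theta_1)+\hat u_E\hat{\bf w}'\sin(\hat\phi_E-\theta_1)$. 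Expanding $\|{\bf H}_{21}\hat{\bf v}_G(\theta_1)\|^2$, the cross term drops out in expectation by $E[\hat u_E]=0$ combined with the independence (under isotropy) of $\hat u_E$ from $\hat{\bf w}'$, $\hat{\bf v}_L$, and ${\bf H}_{21}$, leaving $E[\|{\bf H}_{21}\hat{\bf v}_G(\theta_1)\|^2]=\cos^2(\hat\phi_E-\theta_1)E[\|{\bf H}_{21}\hat{\bf v}_L\|^2]+\sin^2(\hat\phi_E-\theta_1)E[\|{\bf H}_{21}\hat{\bf w}'\|^2]$.

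The main obstacle is bounding $E[\|{\bf H}_{21}\hat{\bf w}'\|^2]$ cleanly, since $\hat{\bf v}_L$ (and hence $\hat{\bf w}'$) itself depends on ${\bf H}_{21}$ through the SVD of the cross channel. The resolution is to condition on $\hat{\bf v}_L$: the residual direction $\hat{\bf w}'$ remains isotropically distributed on the unit sphere of $\hat{\bf v}_L^{\perp}$, so $E[\hat{\bf w}'(\hat{\bf w}')^H\mid\hat{\bf v}_L]={\bf P}_{\hat{\bf v}_L}^{\perp}/(M-1)$. This yields $E[\|{\bf H}_{21}\hat{\bf w}'\|^2]=(E[\|{\bf H}_{21}\|_F^2]-E[\|{\bf H}_{21}\hat{\bf v}_L\|^2])/(M-1)\le M^2\alpha_{21}/(M-1)=\alpha_{21}M/B(1,\frac{M}{M-1})$, using $E[\|{\bf H}_{21}\|_F^2]=M^2\alpha_{21}$ and dropping the nonnegative subtracted term. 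Substituting this upper bound back and multiplying by $P_1$ produces (\ref{prop4_1}).
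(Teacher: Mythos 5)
Your proof is correct and lands on exactly the bounds (\ref{prop4_1_1}) and (\ref{prop4_1}), but the key bounding step differs from the paper's. For (\ref{prop4_1_1}) you follow the template of Proposition \ref{prop3_rev} verbatim, which is precisely what the paper does. For (\ref{prop4_1}) your starting decomposition --- re-anchoring the Geodesic at $\hat{\bf v}_L$ so that $\hat{\bf v}_G(\theta_1)=\cos(\hat\phi_E-\theta_1)\hat{\bf v}_L+\hat u_E\sin(\hat\phi_E-\theta_1)\hat{\bf w}'$ with $\hat{\bf w}'={\bf P}_{\hat{\bf v}_L}^{\perp}\hat{\bf v}_E/\sin\hat\phi_E$ --- is the same as the paper's (\ref{prop4_2}), since $[\hat{\bf v}_L\hat u_L\cos\hat\phi_E-\hat{\bf v}_E](\sin\hat\phi_E)^{-1}=-\hat{\bf w}'$, and the cross term is killed by $E[\hat u_E]=0$ in both treatments. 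Where you genuinely diverge is the bound on the residual term $E\left[\|{\bf H}_{21}\hat{\bf w}'\|^2\right]$: the paper substitutes (\ref{prop4_4_rev1}), replaces $E\left[\|{\bf H}_{21}\hat{\bf v}_E\|^2\right]$ by $\alpha_{21}M$, discards the negative $\cos^2\hat\phi_E\|{\bf H}_{21}\hat{\bf v}_L\|^2$ term, and then invokes $E\left[\frac{1}{1-\cos^2\hat\phi_E}\right]\le \frac{1}{1-E[\cos^2\hat\phi_E]}=\frac{1}{B(1,\frac{M}{M-1})}$, attributing it to Jensen (note that for the convex map $x\mapsto 1/(1-x)$ Jensen actually gives the reverse direction, so that step of the paper is the shakiest one). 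You instead condition on $\hat{\bf v}_L$, use that $\hat{\bf v}_E$ is isotropic and independent of $({\bf H}_{21},\hat{\bf v}_L)$ so that $E[\hat{\bf w}'(\hat{\bf w}')^H\mid\hat{\bf v}_L]={\bf P}_{\hat{\bf v}_L}^{\perp}/(M-1)$, and bound $E\left[\|{\bf H}_{21}\hat{\bf w}'\|^2\right]\le E\left[\|{\bf H}_{21}\|_F^2\right]/(M-1)=\alpha_{21}M^2/(M-1)=\alpha_{21}M/B(1,\frac{M}{M-1})$, which is the identical constant. Your route buys a cleaner and arguably more defensible argument (no Jensen-direction issue, no dropped sign-indefinite terms), at the cost of invoking the standard fact that the normalized projection of an isotropic unit vector onto $\hat{\bf v}_L^{\perp}$ is uniform on that sphere and independent of $\cos\hat\phi_E$; both treatments share the same mild looseness of pulling the $\hat\phi_E$-dependent trigonometric weights outside the expectations, so you are no less rigorous than the paper.
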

\begin{proof}
Following a similar approach as Proposition \ref{prop3_rev}, (\ref{prop4_1_1}) can be easily derived. Note that
\begin{eqnarray}\label{prop4_2}
\!\!\!&\!\!E\!\left[P_1\!\|{\bf H}_{21}\hat{\bf v}_G(\theta_1)\|^2\right] \!= \!P_1\!\cos^2(\hat\phi_E\!-\!\theta_1)E\!\left[\|{\bf H}_{21}\hat{\bf v}_L\|^2\right] \!+ \!&\!\!\nonumber\\\!\!\!& \!\!P_1\!\sin^2(\hat\phi_E\!-\!\theta_1)\!\underbrace{ \!E\!\left[\| {\bf H}_{21}[\hat{\bf v}_L \hat{u}_L \cos(\hat\phi_E) \!-\! \hat{\bf v}_E](\sin\hat\phi_E)^{-\!1}\|^2\right]\!}_{(b)}\!\!.\!\!\!&\!\!\nonumber\\\!\!\!\!\!\!\!\!\!\!
\end{eqnarray}
From \cite{N_Jindal2,CAuYeung}, we again have
\begin{eqnarray}\label{prop4_3}
E\left[\|[{\bf V}_{21}]_M^H\hat{\bf v}_L \|^2\right] =1-2^{B_{12} }B(2^{B_{12} }, \frac{M}{M-1}),
\end{eqnarray}
and, from \cite{N_Jindal2}, $\hat{\bf v}_L$ can be modeled as $\hat{\bf v}_L = \sqrt{1-z^2}{\bf v}_L + z {\bf s}'$,
where ${\bf s}'$ is a unit-norm vector isotropically distributed in the null space of ${\bf v}_L$ and $z$ is quantization error with $E[z]= 2^{B_{12} }B(2^{B_{12} }, \frac{M}{M-1})$. Note that $\frac{1}{\alpha_{21}}{\bf H}_{21}$ is zero-mean normalized Gaussian distributed and independent with $\hat{\bf v}_{E}$, $E\left[ \|{\bf H}_{21}\hat{\bf v}_E\|^2\right]= \alpha_{21}M$. Furthermore, $\hat{\bf v}_E$ can be rewritten as
\begin{eqnarray}\label{prop4_4_rev1}
\hat{\bf v}_E = {\hat u}_L\cos\hat\phi_E \hat{\bf v}_L + {\bf P}_{ \hat{\bf v}_L}^{\perp}\hat{\bf v}_E,
\end{eqnarray}
where ${\bf P}_{ \hat{\bf v}_L}^{\perp}  = {\bf I}_M - \hat{\bf v}_L \hat{\bf v}_L^H$ and ${\hat u}_L = {\hat u}_E^H$ is independent with the second term in (\ref{prop4_4_rev1}). By substituting (\ref{prop4_4_rev1}) into the second expectation of (\ref{prop4_2}), it is then upper bounded as
\begin{eqnarray}\label{prop4_6}
\!\!(b)\!=\! E\!\left[\frac{1}{1\!-\!\cos^2\hat\phi_E}(\alpha_{21} M\!- \!\cos^2\hat\phi_E  \|{\bf H}_{21}\hat{\bf v}_L\|^2 ) \right]\nonumber\\ \leq \frac{\alpha_{21}M}{B(1, \frac{M}{M\!-\!1})},\!
\end{eqnarray}
where the last inequality is from the Jensen's inequality of $E\left[\frac{1}{1-x}\right] \leq \frac{1}{1-E[x]}$ with $x\in (0,1)$.
\end{proof}
Because $\|{\bf H}_{21}\hat{\bf v}_L\|^2 = \sum_{i=1}^M \sigma_{21,i}^2 \|[{\bf V}_{21}]_i^H\hat{\bf v}_L \|^2$, where $\sigma_{21,i}^2$ and $ \|[{\bf V}_{21}]_i^H\hat{\bf v}_L \|^2$ are independent, we have
\begin{eqnarray}\label{RVQ_10}
E\left[\|{\bf H}_{21}\hat{\bf v}_L\|^2\right]
= E[\sigma_{21,M}^2]E[\|[{\bf V}_{21}]_M^H\hat{\bf v}_L] \quad\quad\nonumber\\+ E[\sigma_{21,1}^2]E[\|[{\bf V}_{21}]_1^H\hat{\bf v}_L]+   E[\sum_{i=2}^{M-1} \sigma_{21,i}^2 \|[{\bf V}_{21}]_i^H\hat{\bf v}_L \|^2].
\end{eqnarray}
Then, the upper bound in (\ref{prop4_1}) can be asymptotically upper bounded for large $M$ as
\begin{eqnarray}\label{RVQ_11}
\!\!IN_{21}^{up}\! \lesssim \!\alpha_{21}\!MP_1\!\biggl[ (1\!+\! 4\!\cdot\!2^{-\! \frac{B_{12}}{M}} ) \cos^2(\hat\phi_E\!-\!\theta_1) \!\!\!\nonumber\\+\! \sin^2(\hat\phi_E\!-\!\theta_1)\biggr].\!\!\!\!
\end{eqnarray}
Note that, as $\theta_1$ increases (close to MLB), $ \cos^2(\hat\phi_E-\theta_1)$ increases, which implies that the interference upper bound is more sensitive to the number of feedback bits. Here, we have also utilized that $E[\sigma_{21,M}^2] \leq  \int_{0}^{\infty}\lambda g_{{\bf H}_{11}^H{\bf H}_{11}}(\lambda)d\lambda =M $ in (\ref{RVQ_6}).
Similarly to (\ref{RVQ_7}), $S_{22}^{low}$ can be approximated as
\begin{eqnarray}\label{RVQ_12}
S_{22}^{low} \approx MP\left[ (4- 3 \cdot2^{- \frac{B_{22}}{M}})  \cos^2( \theta_2) +\sin^2(\theta_2)\right],
\end{eqnarray}
\begin{remark}\label{remark6}
From (\ref{RVQ_11}) and (\ref{RVQ_12}), the approximated lower bound of SINR can be written as
\begin{eqnarray}\label{RVQ_13}
\!\!\!\frac{S_{22}^{low}}{\!1\!+\! IN_{21}^{up\!}} \!\geq\! \frac{MP\left[ (4- 3 \cdot2^{- \frac{B_{22}}{M}})  \cos^2( \theta_2) +\sin^2(\theta_2)\right]}{\!1\!\!+\!\! \alpha_{21}\!M\!P_1\!\left[ \!(\!1\!+\! 4\!\cdot\!2^{- \!\frac{B_{12\!}}{M}}\! ) \!  \cos^2(\!\hat\phi_E\!\!-\!\!\theta_1\!) \!+\! \sin^2(\!\hat\phi_E\!\!-\!\!\theta_1\!)\!\right]\!}\!\!\!\nonumber\\ \triangleq SINR^{low}.\quad\quad\quad\quad
\end{eqnarray}
To find the optimal $B_{22}$ and $B_{12}$, by substituting $B_{12} = B - B_{22}$ into $SINR^{low}$ in (\ref{RVQ_13}), we can find $B_{22}$ maximizing $SINR^{low}$ numerically. That is,
\begin{eqnarray}\label{RVQ_13_1}
B_{22} =\arg\underset{B_{22}\in\{0,...,B\}}{\max} SINR^{low}.
\end{eqnarray}
Note that, if the target harvesting energy is small and the harvested energy from the information transmitter is enough ($P_1 =0$), $B_{22}$ maximizing (\ref{RVQ_13}) becomes equal to $B$. Similarly, when the path loss of the cross link is large enough ($\alpha_{21} \rightarrow 0$), $B_{22}$ maximizing (\ref{RVQ_13}) also becomes equal to $B$. That is,  we do not allocate feedback bits for the cross link. In contrast, when $P_1$ and $\alpha_{21}$ are large (the power of the interference signal becomes large), the SINR can be increased by allocating more bits to the cross link (i.e., by increasing $B_{12}$).
\end{remark}

\section{Simulation Results}
\label{sec:simulation}
Computer simulations have been performed to verify the proposed schemes. Throughout the simulations, we generate channel ${\bf H}_{ij}$
according to the i.i.d. zero-mean complex Gaussian distribution with a unit variance for $i = j$ and a variance $\alpha_{12} =\alpha_{21}= \alpha \in [0,1]$ (the relative path loss of the cross link compared to the direct link) as described in Section \ref{sec:systemmodel}. In addition, the path loss of the direct links is assumed to be $10^{-3/2}$ which implies that the path loss exponent is $3$ and $10m$ distance between Tx $i$ and Rx $i$ ($-30dB = 10\log_{10}10^{-3}$). The maximum transmit power is set as $P=50mW$ and the noise power is $1 \mu W$, unless otherwise stated.

\begin{figure}
\begin{center}
\begin{tabular}{c}
\includegraphics[height=4.5cm]{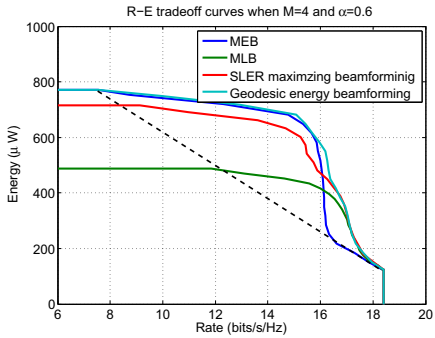}
\end{tabular}
\end{center}
\caption[Fig1_fullCSIT]
{ \label{Fig1_fullCSIT} Achievable R-E region when full CSIT is available at IAP with $M=4$ and $\alpha =0.6$. Full CSIT is required at EAP for SLER beamforming.}
\end{figure}

Fig. \ref{Fig1_fullCSIT} shows the achievable R-E region of four different energy beamforming schemes - MEB, MLB, SLER maximizing beamforming, and Geodesic beamforming, when $M= 4$, $\alpha= 0.6$, and full local CSIT is available at both EAP and IAP. Note that full CSIT at EAP is required for SLER maximizing beamforming. That is, the Algorithm 1 is utilized for SLER maximization, MEB, and MLB, while Algorithm 2 is exploited for Geodesic beamforming. Note that we can see that the R-E region of the Geodesic beamforming covers those of all other beamforming schemes, which is consistent with Theorem 1 and, as Remark \ref{remark2}, the SLER maximizing beamforming has a similar R-E region with the Geodesic beamforming. The dashed line indicates the R-E curves of the time-sharing of 1) the full-power rank-one MEB to EH MS at both EAP and IAP and 2) no transmission at EAP and waterfilling at IAP. Note that MEB shows worse performance than the time-sharing especially when the target required energy is small. That is, because the MEB causes large interference to the ID receiver, it is desirable that, for the low required harvested energy, the first transmitter turns off its power in the time slots where the second transmitter is assigned to exploit the waterfilling. Even in these slots, EH MS can harvest energy from IAP signal. In the remaining slots, EAP opts for a MEB with full power and IAP transfers its information to the ID receiver by steering its beam on EH receiver's channel ${\bf H}_{12}$. Accordingly, the transferred energy to EH MS will be maximized. In these slots, ID MS can also receive its information from IAP. 

\begin{figure}
\centering 
 \subfigure[]
  {\includegraphics[height=4.5cm]{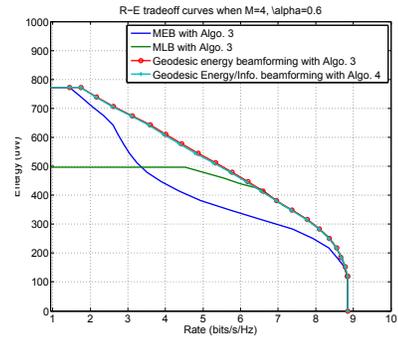}}\quad 
 \subfigure[]
  {\includegraphics[height=4.5cm]{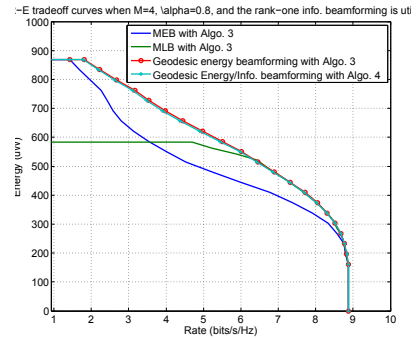}}
\caption[Fig2_partialCSIT]
{ \label{Fig2_partialCSIT} Achievable R-E region when Algorithm 3 and Algorithm 4 are exploited for $M=4$, (a) $\alpha =0.6$ and (b) $\alpha =0.8$.}
\end{figure}

Fig. \ref{Fig2_partialCSIT} shows the achievable R-E region when the rank-one information beamforming is utilized when $M=4$, (a) $\alpha =0.6$ and (b) $\alpha =0.8$. That is, when EAP exploits MEB, MLB, and Geodesic beamforming, ${\bf Q}_2$ is optimally determined by solving the optimization (P2) (Algorithm 3). In addition, the R-E region, when the Geodesic beamforming is exploited in both IAP and EAP (Algorithm 4), is also provided. Note that Algorithm 3 requires full CSIT at IAP, while Algorithm 4 requires partial CSIT at both IAP and EAP. We can see that the Geodesic beamforming in both IAP and EAP with Algorithm 4 exhibits the same performance with the optimal beamforming with Algorithm 3. We can see that the maximum harvesting energy with $\alpha = 0.8$ is higher than that with $\alpha = 0.6$ due to the larger harvested energy from IAP's signal. In addition, even though the overall achievable rates are smaller than those provided in Fig. \ref{Fig1_fullCSIT} due to the rank-one constraint at IAP, the maximum achievable harvesting energy is similar with that in Fig. \ref{Fig1_fullCSIT}. This is because the maximum achievable harvesting energy can be achieved when both IAP and EAP opt for the rank-one beamforming.

\begin{figure}
\begin{center}
\begin{tabular}{c}
\includegraphics[height=4.5cm]{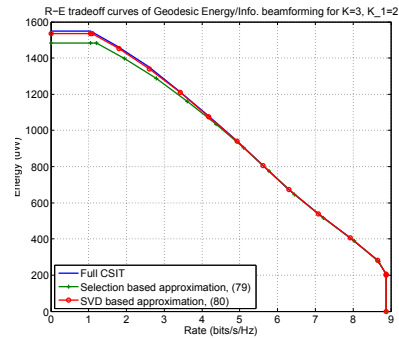}
\end{tabular}
\end{center}
\caption[Fig3_K_user]
{ \label{Fig3_K_user} Achievable R-E region of Geodesic energy/information beamforming for $M=4$, $K=3$, $K_1 =2$, and $\alpha = 0.5$.}
\end{figure}

Fig. \ref{Fig3_K_user} shows the achievable rate of the Geodesic information/energy beamforming for $M=4$, $\alpha = 0.5$, and $K=3$, where two EAP and one IAP coexist. The full local CSIT implies that the EMD is the largest singular vector of $[{\bf H}_{1j}^T,{\bf H}_{2 j}^T]^T$ as discussed in Section \ref{ssec:KuserIFC}. Note that the selection based method of (\ref{Discusseqn1}) exhibits worse performance than other schemes. Especially, because EMD vector is approximated in each transmitter, the maximum harvesting energy is smaller than those of other schemes. In contrast, the SVD based approximation of EMD as (\ref{Discusseqn2}) shows almost similar performance to the full CSIT.

\begin{figure}
\centering 
 \subfigure[]
  {\includegraphics[height=4.3cm]{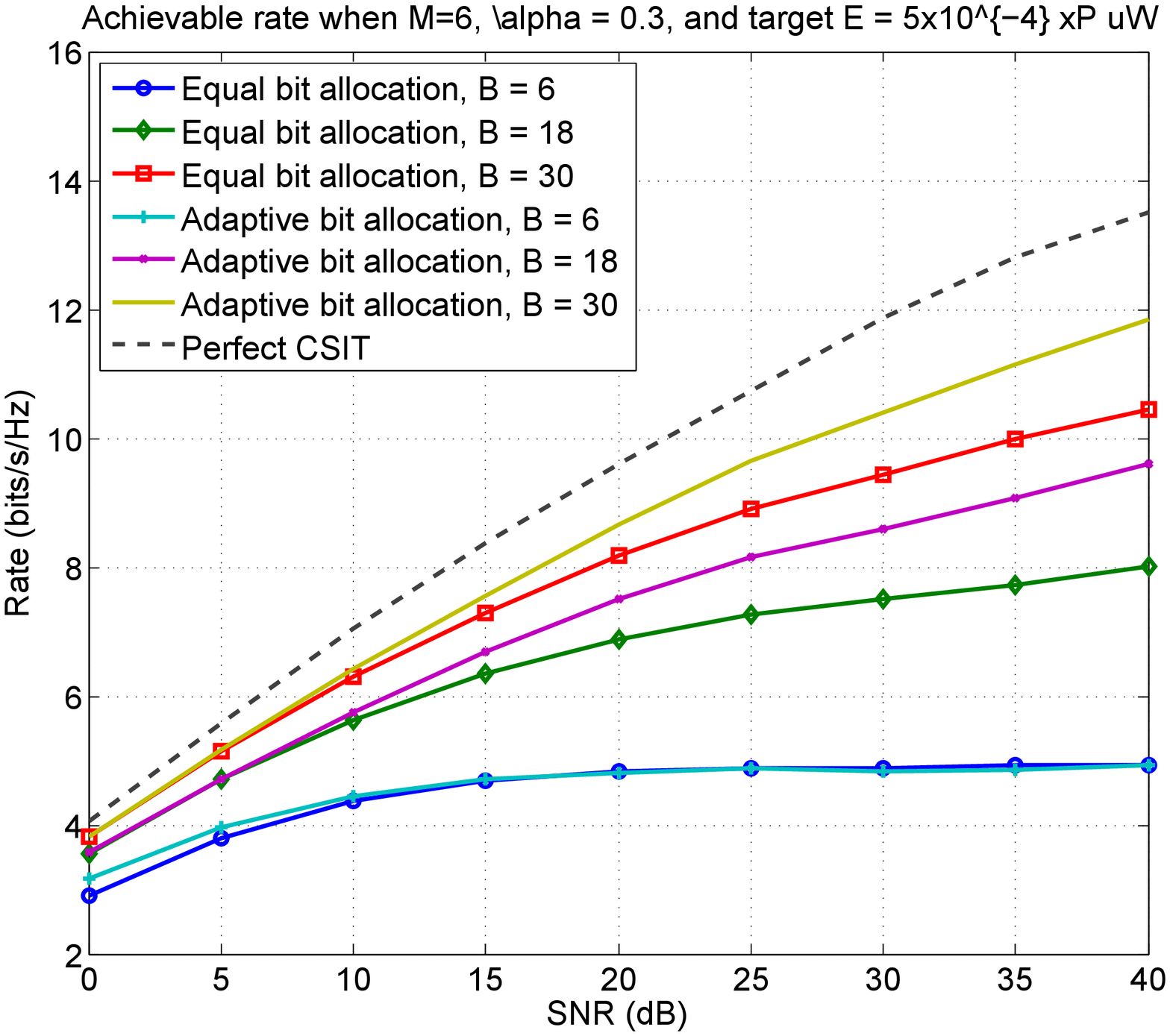}}\\
 \subfigure[]
  {\includegraphics[height=4.3cm]{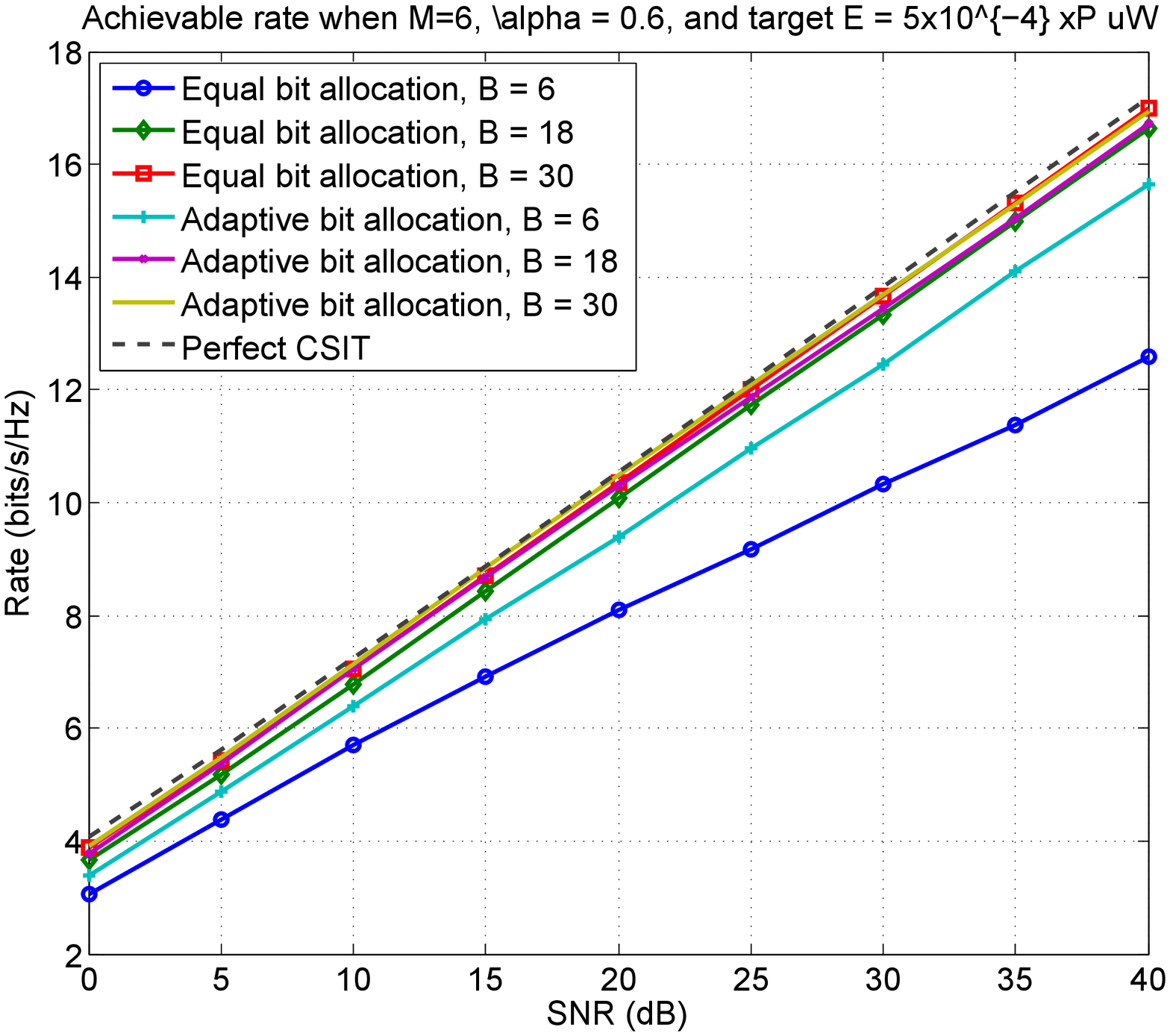}}\\
   \subfigure[]
  {\includegraphics[height=4.3cm]{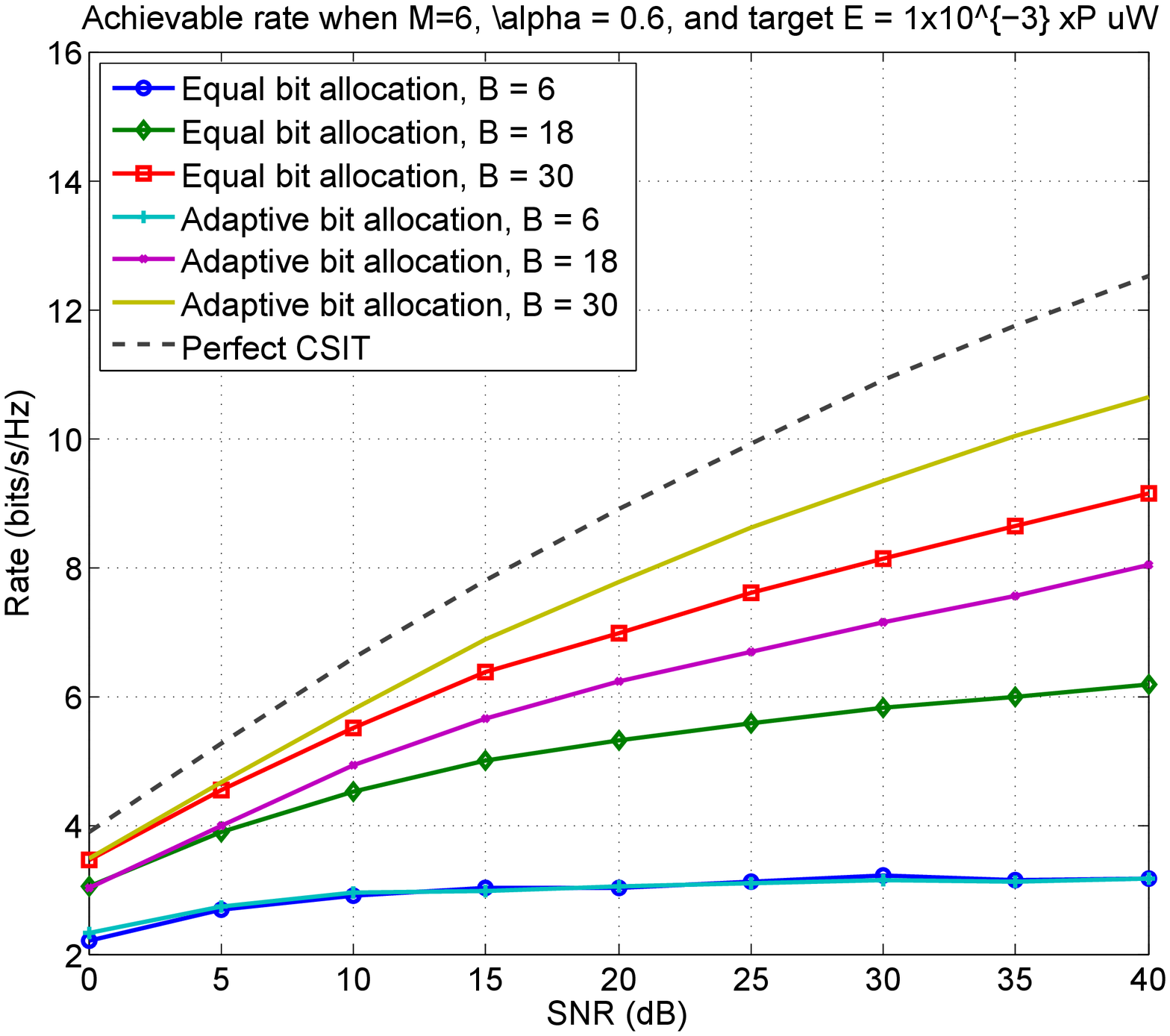}}
 \caption{Achievable rate when $M=6$, and (a) $\alpha=0.3$,  the target energy $\bar E = 5\times 10^{-4}P \mu W$, (b) $\alpha=0.6$ and $\bar E = 5\times 10^{-4}P \mu W$, and (c) $\alpha=0.6$ and $\bar E = 1\times 10^{-3}P \mu W$.} \label{Fig4_Imperfect}
\end{figure}

Fig. \ref{Fig4_Imperfect} shows the achievable rate of the Geodesic information/energy beamforming with limited feedback when $M=6$, and (a) $\alpha=0.3$ and the target energy $\bar E = 5\times 10^{-4}P \mu W$, (b) $\alpha=0.6$ and $\bar E = 5\times 10^{-4}P \mu W$, and (c) $\alpha=0.6$ and $\bar E = 1\times 10^{-3}P \mu W$. Here we have utilized the RVQ \cite{NRavindran} in quantizing ${\bf v}_E$, ${\bf v}_L$, ${\bf w}_I$, and ${\bf v}_L$. As the number of feedback bits ($B$) increases, the performances with limited feedback schemes become close to that of perfect CSIT. In addition, the adaptive bit allocation exhibits better performance than the equal bit allocation. Interestingly, as in Fig. \ref{Fig4_Imperfect}(a), when $\alpha$ is small (i.e., the cross-link path loss is large), the achievable rate saturates. This is because, when $\alpha$ is small, the interference from IAP cannot contribute to the harvested energy at EH MS effectively to satisfy the target energy. Therefore, IAP needs to steer its information beam to EH MS and EAP performs MEB. That is, the SINR at ID MS saturates as Tx power increases. In contrast, when $\alpha$ is large, EAP can steer its energy beam to the MLB and IAP can also steer its beam to ID MS. Therefore, the achievable rate increases proportionally with the SNR (or, transmit power).
Similar saturation can be found in Fig. \ref{Fig4_Imperfect}(c). That is, when the required target energy is large, then the rich interference environment is preferable to both ID/EH MSs to satisfy energy requirement and simultaneously maintain the information rate slope ({\it{degree of freedom}}).

\section{Conclusion}
\label{sec:conc}
In this paper, to reduce the feedback overhead of MSs for the JWIET in the two-user MIMO IFC, we have investigated a Geodesic energy beamforming scheme that requires partial CSI at the EAP. Furthermore, in the two-user MIMO IFC, we have proved that the Geodesic energy beamforming scheme is the optimal strategy. By adding a rank-one constraint on the transmit signal covariance of IAP, we can further reduce the feedback overhead to IAP by exploiting a Geodesic information beamforming scheme. Under the rank-one constraint of IAP's transmit signal, we prove that Geodesic information/energy beamforming approach is the optimal strategy for JWIET in the two-user MIMO. Furthermore, for the deployment of our proposed Geodesic information/energy beamforming schemes to the general K-user MIMO IFC, EAPs and IAPs should estimate the EMD and IMD with their partial CSIT, which can be done by the selection/SVD based approximations using the feedback information reported from MSs. By analyzing the achievable rate-energy performance statistically under the imperfect CSIT, we have proposed an adaptive bit allocation strategy for both EH MS and ID MS.

\useRomanappendicesfalse
\appendices

\section{Proofs of Proposition 1 and Proposition 2}\label{appndix1}
Note that $E_{11}(\theta_1) = \|{\bf H}_{11}{\bf v}_G(\theta_1)\|^2$. From (\ref{Partial4}), because ${\bf H}_{11}{\bf v}_E$ and ${\bf H}_{11}({\bf v}_E)^{\perp}{\bf u}_L$ are orthogonal to each other, we have
\begin{eqnarray}\label{Prop1_9}
\!\!&\!\!\|{\bf H}_{11}\!{\bf v}_G(\!\theta_1\!)\|^2 \!=\! \cos^2(\!\theta_1\!)\|{\bf H}_{11}{\bf v}_E\|^2\! +\!\sin^2(\!\theta_1\!) \| {\bf H}_{11}({\bf v}_E)^{\perp}\!{\bf u}_L\|^2 \!\!&\!\!\nonumber\\\!\!& \!\!= \cos^2(\!\theta_1\!)\sigma_{11,1}^2\!+\! \sin^2(\!\theta_1\!) \!\sum_{i=2}^{M}\!\alpha_i^2\sigma_{11,i}^2,\!\!&\!\!\!
\end{eqnarray}
where $\sum_{i=2}^{M-1}\alpha_i^2 = 1$ due to the fact that $\|({\bf v}_E)^{\perp}{\bf u}_L\|^2 = 1$. Accordingly, for $\theta'_1 >\theta''_1$,
\begin{eqnarray}\label{Prop1_10}
\!\!&\!\!\|{\bf H}_{11}\!{\bf v}_G(\theta'_1)\|^2\!-\!\|{\bf H}_{11}\!{\bf v}_G(\theta''_1)\|^2 \!=\! \sigma_{11,1}^2(\cos^2(\theta'_1) - \cos^2(\theta''_1))\nonumber\!\!&\!\!\\\!\!&\!\!+ \sum_{i=2}^{M}\alpha_i^2\sigma_{11,i}^2(\sin^2(\theta'_1) - \sin^2(\theta''_1)),\!\!&\nonumber\!\!\\\!\!&\!\!
= (\sum_{i=2}^{M}\alpha_i^2\sigma_{11,i}^2 - \sigma_{11,1}^2 )(\sin^2(\theta'_1) - \sin^2(\theta''_1)) <0\!\!&\!\!
\end{eqnarray}
which implies that $E_{11}(\theta'_1) < E_{11}(\theta''_1)$. Accordingly, $f(\theta_1)$ is monotonically decreasing with respect to ${\theta}_1$.

Similarly to what is done in the proof of Proposition \ref{prop3}, we can show that for $\theta'_1 >\theta''_1$, $IN_{21}(\theta'_1) < IN_{21}(\theta''_1)$ (Proposition \ref{prop3_IN}).


\section{Proof of Proposition 3}\label{appndix3}
Let $J({\bf Q}_1) \!\triangleq \!\log \det({\bf I}_{M} \!+\! {\bf H}_{22}^H({\bf I}_M \!+ \!{\bf H}_{21}{\bf Q}_1 {\bf H}_{21}^H)^{-1}{\bf H}_{22}{\bf Q}_2 ).
$ Then, from Lemma \ref{lem1}, the maximization of $J({\bf Q}_1)$ is equivalent with the minimization of $P_1 \|{\bf H}_{21}{\bf v}_G(\theta_1)\|^2 $, because
\begin{eqnarray}\label{Prop2_3}
\det({\bf I}_{M} + {\bf H}_{21}{\bf Q}_1 {\bf H}_{21}^H) = 1 + P_1 \|{\bf H}_{21}{\bf v}_G(\theta_1)\|^2.
\end{eqnarray}
Now, let us assume that, given the optimal $\theta_1^o$ and $P_1^o$, we have $\theta'_1$ such that $\eta(\theta'_1) > \eta(\theta_1^o)$. If $\|{\bf H}_{21}{\bf v}_G(\theta'_1)\|^2 \geq \|{\bf H}_{21}{\bf v}_G(\theta_1^o)\|^2$, then we can set $P_1' = P_1^o\frac{ \|{\bf H}_{21}{\bf v}_G(\theta_1^o)\|^2}{\|{\bf H}_{21}{\bf v}_G(\theta'_1)\|^2}$, resulting in $P_1'\|{\bf H}_{11}{\bf v}_G(\theta'_1)\|^2 = P_1^o\frac{ \|{\bf H}_{21}{\bf v}_G(\theta_1^o)\|^2}{\|{\bf H}_{21}{\bf v}_G(\theta'_1)\|^2}\|{\bf H}_{11}{\bf v}_G(\theta'_1)\|^2 \geq P_1^o\|{\bf H}_{11}{\bf v}_G(\theta_1^o)\|^2  $. That is, from Lemma \ref{lem1} (see also (\ref{Prop2_3})), $\theta'_1$ yields more harvested energy given the same achievable rate. For $\|{\bf H}_{21}{\bf v}_G(\theta'_1)\|^2 < \|{\bf H}_{21}{\bf v}_G(\theta_1^o)\|^2$, we have to consider two cases -- $\|{\bf H}_{11}{\bf v}_G(\theta'_1)\|^2 \geq \|{\bf H}_{11}{\bf v}_G(\theta_1^o)\|^2$ or $\|{\bf H}_{11}{\bf v}_G(\theta'_1)\|^2 < \|{\bf H}_{11}{\bf v}_G(\theta_1^o)\|^2$. The former case corresponds to the case that $\theta'_1$ with $P_1' = P_1^o$ yields more harvested energy and more achievable rate. For the latter case, from Propositions \ref{prop3} and \ref{prop3_IN}, $\theta'_1$ does not exist satisfying $\|{\bf H}_{21}{\bf v}_G(\theta'_1)\|^2 < \|{\bf H}_{21}{\bf v}_G(\theta_1^o)\|^2$ and $\|{\bf H}_{11}{\bf v}_G(\theta'_1)\|^2 < \|{\bf H}_{11}{\bf v}_G(\theta_1^o)\|^2$, simultaneously. Therefore, all cases contradict the statement that $\theta_1^o$ yields the boundary point of the achievable $C_{R-E}$ for the Geodesic energy beamforming.

\section{Proofs of Theorem 1 and Theorem 2}\label{appndix4}
\subsubsection{Proof of Theorem 1}
Let us assume that ${\bf v}_1^o$ is an optimal beamforming vector yielding a boundary point of the optimal R-E region. Then, from Proposition \ref{prop2}, the optimal solution implies that, there is no beamforming vector ${\bf v}_1$ that has
\begin{eqnarray}\label{thm1_1}
\frac{\|{\bf H}_{11}{\bf v}_1\|^2}{ \|{\bf H}_{21}{\bf v}_1\|^2 } > \frac{\|{\bf H}_{11}{\bf v}_1^o\|^2}{ \|{\bf H}_{21}{\bf v}_1^o\|^2}.
\end{eqnarray}
First, we define $\cos (\phi_{E1}^o )= |{\bf v}_E^H {\bf v}_1^o |, \quad \cos (\phi_{L1}^o) = |{\bf v}_L^H {\bf v}_1^o |$,
where $\phi_{E1}^o$ (resp. $\phi_{L1}^o$) is the principal angle between ${\bf v}_E$ (resp. ${\bf v}_L$) and ${\bf v}_1^o $. Then, similarly to (\ref{Prop1_9}), $\|{\bf H}_{11}{\bf v}_1^o\|^2$ and $\|{\bf H}_{21}{\bf v}_1^o\|^2$ can be represented as
\begin{eqnarray}\label{thm1_4}
\!\!\|{\bf H}_{11}{\bf v}_1^o\|^2\! =\! \cos^2 (\phi_{E1}^o) \sigma_{11,1}^2 \!+\! \sin^2(\phi_{E1}^o) \!\!\sum_{i=2}^M\alpha_{Ei}^2\sigma_{11,i}^2,\nonumber\!\!\\
\!\!\|{\bf H}_{21}{\bf v}_1^o\|^2\!=\! \cos^2 (\phi_{L1}^o) \sigma_{21,M}^2 \!+\! \sin^2(\phi_{L1}^o)\! \! \sum_{i=1}^{M\!-\!1}\alpha_{Li}^2\sigma_{21,i}^2,\!\!\!\!\!
\end{eqnarray}
where $ \sum_{i=2}^M\alpha_{Ei}^2 = 1$ and $ \sum_{i=1}^{M-1}\alpha_{Li}^2 = 1$. Note that for ${\bf v}'_1$ with $\phi'_{E1} <\phi_{E1}^o$, we have the inequality shown at the top of the next page.
\begin{figure*}[!t]
\scriptsize
\begin{eqnarray}\label{thm1_5}
\|{\bf H}_{11}{\bf v}'_1\|^2 - \|{\bf H}_{11}{\bf v}_1^o\|^2  &\!=\!&( \cos^2 (\phi'_{E1}) -  \cos^2 (\phi_{E1}^o) ) \sigma_{11,1}^2 + \sin^2(\phi'_{E1}) \sum_{i=2}^M\alpha_{Ei}'^2\sigma_{11,i}^2 - \sin^2(\phi_{E1}^o) \sum_{i=2}^M\alpha_{Ei}^2\sigma_{11,i}^2,\nonumber\\
&\!>\!& ( \cos^2 (\phi'_{E1}) -  \cos^2 (\phi_{E1}^o) ) \sigma_{11,1}^2 +( \sin^2 (\phi'_{E1}) -  \sin^2 (\phi_{E1}^o) ) \sigma_{11,2}^2\nonumber\\
&\!=\!& ( \cos^2 (\phi'_{E1}) -  \cos^2 (\phi_{E1}^o) ) (\sigma_{11,1}^2 - \sigma_{11,2}^2 ) > 0.
\end{eqnarray}
\hrulefill \vspace*{2pt}
\end{figure*}
This implies that as the principal angle between ${\bf v}_E$ and ${\bf v}_1$ (denoted as $\phi_{E1}$) decreases, $\|{\bf H}_{11}{\bf v}_1\|^2$ increases (monotonic decreasing). Similarly, we can find that $\|{\bf H}_{21}{\bf v}_1\|^2$ is monotonic increasing with respect to the principal angle $\phi_{L1}$ between ${\bf v}_L$ and ${\bf v}_1$. Therefore, to maximize
\begin{eqnarray}\label{thm1_5}
\!\!\frac{\|{\bf H}_{11}\!{\bf v}_1\|^2}{ \|{\bf H}_{21}\!{\bf v}_1\|^2 }\!=\! \frac{ \cos^2 (\phi_{E1}) \sigma_{11,1}^2 \!+ \!\sin^2(\phi_{E1}) \sum_{i=2}^M\alpha_{Ei}^2\sigma_{11,i}^2 }{ \cos^2 (\phi_{L1}) \sigma_{21,M}^2 \!+\! \sin^2(\phi_{L1}) \sum_{i=1}^{M-1}\alpha_{Li}^2\sigma_{21,i}^2 },\!\!
\end{eqnarray}
both $\phi_{E1}$ and $\phi_{L1}$ should be minimized.

Now, we assume that ${\bf v}_1^o$ is not on the Geodesic curve between $[{\bf V}_{11}]_1$ and $[{\bf V}_{21}]_M$. We can always find ${\bf v}_1$ such that $\phi_{E1}<\phi_{E1}^o$ and $\phi_{L1}<\phi_{L1}^o$ on the Geodesic curve. Note that the minimum value of $\phi_{E1}+\phi_{L1} = \phi_E$ as in (\ref{Partial6}) (see also Fig. \ref{geodesicOptimal}).
\subsubsection{Proof of Theorem 2}
Then, similarly to (\ref{Prop1_9}), $\|{\bf H}_{22}{\bf w}_2\|^2$ (resp. $\|{\bf H}_{12}{\bf w}_2\|^2$) can be represented as
\begin{eqnarray}\label{thm2_4}
\|{\bf H}_{22}{\bf w}_2\|^2 = \cos^2 (\phi_{I2}) \sigma_{22,1}^2 + \sin^2(\phi_{I2}) \sum_{i=2}^M\alpha_{Ii}^2\sigma_{22,i}^2,\nonumber\\
\|{\bf H}_{12}{\bf w}_2\|^2 = \cos^2 (\phi_{L2}) \sigma_{12,1}^2 + \sin^2(\phi_{L2}) \sum_{i=1}^{M-1}\alpha_{Li}^2\sigma_{12,i}^2,
\end{eqnarray}
where $\phi_{I2}^o$ (resp. $\phi_{L2}^o$) is the principal angle between ${\bf w}_I$ (resp. ${\bf w}_L$) and ${\bf w}_2$ and $ \sum_{i=2}^M\alpha_{Ii}^2 = 1$ and $ \sum_{i=2}^{M}\alpha_{Li}^2 = 1$. Note that, similarly to Theorem \ref{thm1}, we can find that $\|{\bf H}_{22}{\bf w}_2\|^2$ (resp. $\|{\bf H}_{12}{\bf w}_2\|^2$) is monotonic decreasing with respect to the principal angle $\phi_{I2}$ (resp. $\phi_{L2}$). From (\ref{PartialCSIT3}), to maximize the achievable rate and harvested energy, both $\phi_{I2}$ and $\phi_{L2}$ should be minimized. Then, we assume that ${\bf w}_2$ is not on the Geodesic curve between $[{\bf V}_{22}]_1$ and $[{\bf V}_{12}]_1$. We can always find ${\bf w}'_2$ such that $\phi'_{I2}<\phi_{I2}$ and $\phi'_{L2}<\phi_{L2}$ on the Geodesic curve. Note that the minimum value of $\phi_{I2}+\phi_{L2} = \phi_I$.


\bibliographystyle{IEEEtran}
\bibliography{IEEEabrv,myref}

\end{document}